\newcommand{\bE}{\ensuremath{\mathbb{E}}}
\newcounter{i}
\newtheorem{theorem}{Theorem}[section]
\newtheorem{lemma}[theorem]{Lemma}
\newtheorem{observation}[theorem]{Observation}
\newtheorem{conjecture}[theorem]{Conjecture}
\newtheorem{question}[theorem]{Question}
\noindent \emph{Proof.} {}{#1}{}}{\hfill
\theoremstyle{plain} 
\newcommand{\thistheoremname}{}
\newtheorem{genericthm}[section]{\thistheoremname}
\newcommand{\const}{84000}
\newcommand{\cA}{\mathcal{A}}
\newcommand{\cK}{\mathcal{K}}
\newcommand{\cS}{\mathcal{S}}
\newcommand{\cT}{\mathcal{T}}
\newcommand{\bP}{\mathbb{P}}
\newcommand{\diam}{\mathrm{diam}}
\newcommand{\Glau}{\mathrm{Glau}}
\newcommand{\flip}{\mathrm{flip}}
\title{Rapid mixing of Glauber dynamics for colorings below Vigoda's $11/6$ threshold}
\author{
Michelle Delcourt
\thanks{School of Mathematics,
University of Birmingham, Birmingham, UK  {\tt m.delcourt@bham.ac.uk,}. Research supported by supported by EPSRC grant EP/P009913/1.}
\and
Guillem Perarnau
\thanks{School of Mathematics,
University of Birmingham, Birmingham, UK.  {\tt g.perarnau@bham.ac.uk}.}
\and
Luke Postle
\thanks{Combinatorics and Optimization Department,
University of Waterloo, Waterloo, Ontario N2L 3G1, Canada {\tt lpostle@uwaterloo.ca}. Partially supported by NSERC
under Discovery Grant No. 2014-06162.}}
\date{\today}
\begin{document}
\maketitle
\begin{abstract}
A well-known conjecture in computer science and statistical physics is that Glauber dynamics on the set of $k$-colorings of a graph $G$ on $n$ vertices with maximum degree $\Delta$ is rapidly mixing for $k \geq \Delta +2$. In FOCS 1999, Vigoda showed rapid mixing of flip dynamics with certain flip parameters on the set of proper $k$-colorings for $k > \frac{11}{6}\Delta$, implying rapid mixing for Glauber dynamics.  In this paper, we obtain the first improvement beyond the $\frac{11}{6}\Delta$ barrier for general graphs by showing rapid mixing for $k > (\frac{11}{6} - \eta)\Delta$ for some positive constant $\eta$. The key to our proof is combining path coupling with a new kind of metric that incorporates a count of the ``extremal configurations'' of the chain. 
Additionally, our results extend to list coloring, a widely studied generalization of coloring.  Combined, these results answer two open questions from Frieze and Vigoda's 2007 survey paper on Glauber dynamics for colorings.
\end{abstract}

\newpage

\section{Introduction}

Let $G=(V,E)$ be a graph on $n$ vertices with maximum degree $\Delta$, and let $[k]$ denote the set $\left\{1,2,\ldots, k \right\}$. A \emph{$k$-coloring} of $G$ is an assignment $\sigma: V(G) \rightarrow [k]$; we say that a $k$-coloring is \emph{proper} if no two adjacent vertices receive the same color. Counting the number of proper $k$-colorings of a graph is a computationally hard problem~\cite{V79}. Jerrum, Valiant, and Vazirani~\cite{JVV86} showed that a nearly uniform sampler gives rise to an approximate enumeration, motivating the question of finding an algorithm to efficiently generate uniformly random proper colorings of a graph. This question is a central topic in computer science and statistical physics.

To this end, we study the following Markov chain Monte Carlo algorithm known as Glauber dynamics~(e.g. see~\cite{FV07}). Let $\Omega_0$ be the set of proper $k$-colorings of $G$.
The \emph{Glauber dynamics for $k$-colorings} is a discrete-time Markov chain $(X_t)$ with state space $\Omega_0$ and transitions between states given by recoloring at most one vertex; if $X_t=\sigma$, then we proceed as follows.
\begin{enumerate}
\setlength\itemsep{0em}
\item Choose $u$ uniformly at random from $V(G)$.
\item For all vertices $v \neq u$, let $X_{t+1}(v) = \sigma(v)$.
\item Choose $c$ uniformly at random from $[k]$, if $c$ does not appear among the colors in the neighborhood of $u$ then let $X_{t+1}(u) = c$, otherwise let $X_{t+1}(u) = \sigma(u)$.
\end{enumerate}

If $k \geq \Delta +1$, the greedy algorithm shows that $\Omega_0\neq \emptyset$. It is easy to check that Glauber dynamics is ergodic provided that $k\geq \Delta+2$. This is not the case for $k=\Delta+1$ as the chain might not be irreducible due to the existence of \emph{frozen colorings}, fixed points of Glauber dynamics.

A central conjecture in the area is that Glauber dynamics mixes in polynomial time (\emph{rapid mixing}) for $k\geq \Delta+2$. If so, this provides the existence of a fully polynomial almost uniform sampler (FPAUS), and by~\cite{JVV86}, an FPRAS for $k$-colorings of graphs with maximum degree $\Delta$, provided that $k\ge \Delta + 2$. A stronger version of the conjecture states that it mixes in time $O(n\log{n})$, which is best possible due to the result of Hayes and Sinclair~\cite{HS}.

Jerrum~\cite{J95} showed that for $k > 2 \Delta$ the mixing time is $O(n \log n)$. Salas and Sokal~\cite{SS} used Dobrushin's uniqueness criterion to obtain the same bound.  In 1999, Vigoda~\cite{V00} made a major breakthrough in the area by showing that an alternative chain to sample colorings, known as flip dynamics~(see below for an informal definition and see Subection~\ref{sec:flip} for a formal one), has mixing time $O(n \log n)$ for $k > \frac{11}{6}\Delta$. As a corollary and for the same range of $k$, one obtains that Glauber dynamics for $k$-colorings has mixing time $O(n^2 \log n)$ and that the $k$-state zero temperature anti-ferromagnetic Potts model on $\mathbb{Z}^d$ lies in the disordered phase when $k > \frac{11}{3}d$. It has been observed that Vigoda's result actually implies $O(n^2)$ mixing time for Glauber dynamics~(see e.g.  Chapter 14 in~\cite{LP}).

This conjecture has raised a lot of interest, and in the last 20 years, Vigoda's bound has been improved for particular classes of graphs such as graphs with large girth~\cite{DF,DFHV04,HV03,HV05,M}, trees~\cite{MSW07}, planar graphs~\cite{HVV15} and random graphs~\cite{EHSV18,MS10}. We refer the interested reader to the introduction of~\cite{EHSV18} for an extensive survey on the topic.
As to the original conjecture, no improvement over $\frac{11}{6}\Delta$ had appeared. In 2007, Frieze and Vigoda~\cite{FV07} asked whether Vigoda's approach could be pushed further.  The main contribution of this paper is to answer this question in the affirmative, thus breaking the $\frac{11}{6}\Delta$ barrier for general graphs.

\begin{theorem}\label{thm:main_Glau}
The Glauber dynamics for $k$-colorings on a graph on $n$ vertices with maximum degree $\Delta$ and $k \geq \left(\frac{11}{6}-\eta\right)\Delta$, with $\eta = \frac{1}{\const}$, has mixing time
$$
t_{\textrm{Glau}}= O\left(\left(k \log{k}\right)\cdot n^2\log{n}\right)\;.
$$
\end{theorem}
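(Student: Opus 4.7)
The plan is to adapt Vigoda's path coupling argument for the flip chain on $k$-colorings, replacing the Hamming metric by a weighted metric that is sensitive to the local structure around the disagreement vertex. A standard comparison argument (see e.g. Chapter 14 of \cite{LP}) converts a path-coupling contraction rate of at least $(1+c)/n$ for the flip chain into the mixing time bound $O\left((k\log k)\cdot n^2\log n\right)$ for Glauber dynamics claimed in the theorem, so it suffices to exhibit such contraction whenever $k\ge(11/6-\eta)\Delta$.

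For the path coupling itself, I would take two colorings $\sigma,\tau$ that differ at a single vertex $v$, with $\sigma(v)=a$ and $\tau(v)=b$, and construct a coupling of one step of the flip chain starting from $\sigma$ and $\tau$ respectively. In Vigoda's calculation, the expected change in Hamming distance is governed by the number of size-$2$ bichromatic Kempe components in colors $\{a,b\}$ rooted at neighbors of $v$; when all $\Delta$ neighbors of $v$ lie in such components, Vigoda's weighted flip probabilities are only barely contractive, and this extremal configuration is precisely what pins the threshold at $11/6$. To escape it, I would define a metric of the form $w(\sigma,\tau)=1+\alpha\cdot N(\sigma,\tau)$ on Hamming-$1$ pairs, where $N(\sigma,\tau)$ counts the neighbors of $v$ that participate in such extremal $2$-components (or a suitable refinement of this count that also records the colors around them), with $\alpha>0$ a small tunable parameter, and extend $w$ additively along shortest Hamming paths to larger distances. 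Configurations far from extremal then carry weight close to $1$, so Vigoda's contraction survives with only a tiny loss, while the $\alpha\cdot N$ term provides extra slack exactly in the regime where Vigoda's bound is tight.

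The core technical step is a case analysis showing that the expected one-step change in $w$ is at most $-c/n$, split according to whether $N(\sigma,\tau)$ is small or large. When $N$ is small we are safely away from the extremal configuration, so Vigoda's original computation, fed the $\eta\Delta$ slack coming from our hypothesis, already gives contraction of order $\eta$ in the base term. When $N$ is large, Vigoda's base contraction is essentially zero, but many coupled flips strictly reduce $N$, and the resulting gain in the $\alpha\cdot N$ term makes up the deficit. Choosing $\alpha$ and $\eta$ both of order $1/\const$ so that the two regimes overlap completes the proof. I expect the hard part to be bookkeeping the moves that \emph{increase} $N$: a flip rooted two vertices away from $v$ can spawn a new extremal $2$-component at a neighbor of $v$, and such creation moves are not directly governed by Vigoda's weighted flip probabilities. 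The key lemma will be a structural bound showing that for every Hamming-$1$ pair the total weight of "creation" flips is strictly smaller than the total weight of "destruction" flips, which in turn requires carefully enumerating the possible configurations of the second neighborhood of $v$. Extending the argument from proper coloring to list coloring then adds only bookkeeping, since the path coupling and the Kempe-component combinatorics are local.
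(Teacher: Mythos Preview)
Your overall strategy---path coupling for flip dynamics with a metric that rewards distance from the tight configurations, followed by a comparison to Glauber---is exactly the paper's. But two substantive pieces are wrong or missing.

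First, your identification of the tight case is off. The worst case in Vigoda's analysis is not governed by size-$2$ components in colors $\{\sigma(v),\tau(v)\}$; rather, for each \emph{third} color $c$ one looks at the Kempe components at the $c$-colored neighbors of $v$ in colors $\{c,\sigma(v)\}$ and $\{c,\tau(v)\}$ (the paper's ``configurations''). For Vigoda's flip parameters there are, up to symmetry, \emph{six} configurations at which the per-color bound $\nabla_H(\sigma,\tau,c)\le \tfrac{11}{6}r-1$ is tight: $(2;1)$, $(3;1)$, $(4;1)$, $(5;1)$, $(2,2;1,1)$, $(3,3;1,1)$. If your metric tracks only one of them, the argument fails on pairs where every color sits in one of the other five: there you get neither Vigoda slack nor any $\alpha N$ slack.

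Second---and this is the idea you are missing---the paper does not use Vigoda's parameters. It observes that the optimal value $\kappa=\tfrac{11}{6}$ of Vigoda's LP is attained by an entire family of flip parameters, and solves a second LP to pick parameters $\mathbf{p}^*$ that minimize the number of tight constraints. With $\mathbf{p}^*$ only two configurations remain tight, $(2;1)$ and $(3,3;1,1)$; every other configuration satisfies the strictly better bound with $\kappa^*=\tfrac{161}{88}$. The metric then tracks exactly those two, and your ``hard part'' (creation versus destruction of extremal configurations) becomes short: destruction comes from recoloring a single vertex of the extremal component to any of its $\ge k-\Delta-2$ free colors, while creation is controlled by a crude count of components meeting the relevant neighborhoods.

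A minor technical point: the paper's perturbation has the opposite sign, $\omega(\sigma,\tau)=1-\gamma(1-\beta)\in[1-\gamma,1]$ with $\beta$ the \emph{fraction} of neighbors in extremal configurations and $\gamma<\tfrac12$. Keeping $\omega\le 1$ ensures single edges are shortest paths, so $(\Gamma,\omega)$ is a pre-metric; with your $1+\alpha N$ and $N$ an unnormalized count up to $\Delta$, you would need $\alpha=O(1/\Delta)$ to guarantee this.
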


As in~\cite{V00}, Theorem~\ref{thm:main_Glau}  will follow as a corollary of a similar result for  flip dynamics, which we now introduce. 

The \emph{flip dynamics for $k$-colorings with flip parameters $\mathbf{p}=(p_1,p_2,\dots)$} is a Markov chain with space state $\Omega_0$ and transitions between states given by swapping the colors of a maximum bicolored connected set of vertices $S$ (called \emph{Kempe component}) with probability proportional to $p_{|S|}$ (see Section~\ref{sec:flip} for a precise definition). We say that $\mathbf{p}$ is \emph{bounded} if there exists an integer $\ell_0$ (independent of $n$) such that $p_\ell=0$ for every $\ell\geq \ell_0$.

\begin{theorem}\label{thm:main}
There exists a bounded $\mathbf{p}$ such that flip dynamics for $k$-colorings with flip parameters $\mathbf{p}$ on a graph on $n$ vertices with maximum degree $\Delta$ and $k \geq \left(\frac{11}{6}-\eta\right)\Delta$, with $\eta = \frac{1}{\const}$, has mixing time
$$
t_{\textrm{flip}(\mathbf{p})}\leq kn\log{(4n)}\;.
$$
\end{theorem}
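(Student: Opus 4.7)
The plan is to prove Theorem~\ref{thm:main} using the path coupling method of Bubley--Dyer. It suffices to define a suitable metric $\Phi$ on $\Omega_0$ (generated by assigning weights to ``edges'', i.e., pairs of colorings at Hamming distance $1$) and to construct a one-step coupling of the flip dynamics from pairs $(\sigma, \tau)$ at $\Phi$-distance one such that $\bE[\Phi(\sigma', \tau')] \leq 1 - \lambda$ for $\lambda = \Omega(1/(kn))$. Combined with an $O(n)$ bound on $\diam(\Phi)$, this yields mixing time at most $kn \log(4n)$.

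To choose $\Phi$, first recall that Vigoda's analysis with the Hamming metric $d_H$, optimized over the flip parameters $p_1, p_2, \dots$, gives strict contraction for $k > \tfrac{11}{6}\Delta$, and that the bound is tight only when the local structure of $\sigma$ and $\tau$ at the (single) disagreement vertex $v$ realizes a very specific pattern, which I will call an \emph{extremal configuration} (heuristically, the neighborhood of $v$ is saturated by short Kempe chains of a prescribed type). Away from such patterns, the contraction inequality already has strict slack at $k = \tfrac{11}{6}\Delta$. The idea is then to define a weight $w(\sigma, \tau) = 1 + \alpha \cdot e(\sigma, \tau)$ on each edge of the Hamming graph, where $e(\sigma, \tau) \in \mathbb{Z}_{\geq 0}$ counts the extremal configurations present in $\sigma$ and $\tau$, localized appropriately, and $\alpha > 0$ is a small constant. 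The resulting metric $\Phi$ is the shortest-path metric induced by $w$; since $e$ is uniformly bounded by $O(n)$, $\Phi$ remains comparable to $d_H$ and has diameter $O(n)$.

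The key heuristic justifying this choice is that extremal configurations are themselves unstable under the flip dynamics: the very Kempe swap that destroys an extremal configuration is executed with positive probability in one step of the chain. So in expectation, the count of extremal configurations decreases, yielding a negative contribution to $\bE[\Phi(\sigma', \tau') - \Phi(\sigma, \tau)]$ that can be traded against the (a priori only nonpositive) drift from $d_H$ in Vigoda's tight case. I would therefore organize the proof as a case analysis on the local structure at the disagreement vertex $v$: in the ``non-extremal'' cases Vigoda's argument, with slightly adjusted flip parameters, already delivers contraction for $k$ slightly below $\tfrac{11}{6}\Delta$; in the ``extremal'' cases the absence of contraction in $d_H$ is compensated by (i) the destruction of extremal configurations elsewhere in the graph under the coupled chain, and (ii) the fact that with positive probability the ``correct'' coupled Kempe swap simultaneously reduces $d_H$ to $0$ and removes the extremal configuration at $v$.

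The main obstacle I foresee is the delicate bookkeeping around the extremal configurations: the definition of $e(\sigma, \tau)$ must be fine enough to match exactly the worst cases of Vigoda's analysis, yet stable enough under coupled steps that one can bound the expected change of $\alpha \cdot e$ by a suitably negative quantity. In particular, one must control the possible creation of new extremal configurations elsewhere in the graph during a coupled step near $v$, and ensure that the shortest-path realization of $\Phi$ does not obscure the contraction estimate at $\Phi$-distance one. Choosing the bounded flip parameters $\mathbf{p} = (p_1, \dots, p_{\ell_0})$, the weight $\alpha$, and the precise notion of ``extremal configuration'' so that everything balances at $k = (\tfrac{11}{6} - \eta)\Delta$ reduces to a concrete (if tedious) optimization over finitely many parameters, and the particular constant $\eta = 1/\const$ reflects the tightness of that optimization.
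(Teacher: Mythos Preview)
Your overall strategy is right and matches the paper's: path coupling with a metric that perturbs the Hamming distance by a term measuring extremal configurations, with flip parameters chosen so that only a small number of configurations are tight in Vigoda's inequality. Two points of comparison are worth making.

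First, the sign of your perturbation is opposite to the paper's, and this matters for the analysis. You set $w(\sigma,\tau)=1+\alpha\, e(\sigma,\tau)\ge 1$, so $\Phi\ge d_H$; the paper instead takes $\omega(\sigma,\tau)=1-\gamma(1-\beta_{\sigma,\tau}(v))\le 1$, so the induced metric satisfies $d\le d_H$. The advantage of the paper's choice is that for any coupled transition one has $d(\sigma_S,\tau_{S'})\le d_H(\sigma_S,\tau_{S'})$, which lets one reuse Vigoda's Hamming--distance estimates verbatim for the ``$\nabla_H$'' part and isolate the new work entirely in a ``$\nabla_B$'' term governed by $d_B:=d_H-d\ge 0$. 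With your convention $\Phi\ge d_H$, every coupled step that increases the Hamming distance must also be charged for the extra weight along the new shortest path, and that extra charge competes against the gain you are trying to extract; the bookkeeping is doable but lossier, and your remark about ``creation of new extremal configurations elsewhere in the graph'' is exactly the symptom of this. The paper avoids that issue structurally.

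Second, the extremal count in the paper is \emph{local to the disagreement vertex $v$}: $\beta_{\sigma,\tau}(v)$ is the fraction of neighbours of $v$ participating in extremal configurations for some colour $c$, not a global count. The mechanism is not ``Kempe swaps destroy extremal configurations elsewhere'', but rather: flips of Kempe components \emph{not} containing $v$ (the set $\overline{\cA_\varphi}$) tend to change the sizes of the components rooted at $N(v)$ and thus convert extremal configurations at $v$ into non-extremal ones (Lemma~\ref{lem:complement}). Your phrase ``localized appropriately'' suggests you sensed this, but the later references to ``elsewhere in the graph'' point in the wrong direction. Finally, the paper sharpens the picture by running a second LP to choose flip parameters $\mathbf p^*$ for which only \emph{two} configurations (up to symmetry), namely $(2;1)$ and $(3,3;1,1)$, remain extremal; this is what makes the $\nabla_B$ analysis tractable, and it is the concrete content behind your ``finite optimization''.
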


Theorem~\ref{thm:main} and Lemma 7 of~\cite{V00} imply that the $k$-state zero temperature anti-ferromagnetic Potts model on $\mathbb{Z}^d$ lies in the disordered phase when $k \geq  (\frac{11}{3}-2\eta)d$.

The second part of the paper is devoted to studying the sampling of list colorings, a natural and much-studied generalization of coloring.  
Frieze and Vigoda~\cite{FV07} asked if the results obtained for sampling colorings can be transferred to list coloring. Jerrum's proof for $k> 2\Delta$~\cite{J95} carries over immediately for list coloring; however, rapid mixing of Glauber dynamics for list coloring was not previously known for $k < 2 \Delta$.  We use a modified version of flip dynamics for list colorings introduced in Section~\ref{sec:list} to obtain the analogue of Theorem~\ref{thm:main_Glau} for list colorings.

\begin{theorem}\label{thm:list_Glau}
The Glauber dynamics for $k$-list-colorings on a graph on $n$ vertices with maximum degree $\Delta$ and $k \geq \left(\frac{11}{6}-\eta\right)\Delta$, with $\eta=\frac{1}{\const}$, has mixing time $O((k\log{k})\cdot n^2\log{n})$.
\end{theorem}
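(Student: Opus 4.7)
The plan is to reduce Theorem~\ref{thm:list_Glau} to a list-coloring analogue of Theorem~\ref{thm:main} for the modified flip dynamics promised in Section~\ref{sec:list}, and then apply the same flip-to-Glauber comparison that deduces Theorem~\ref{thm:main_Glau} from Theorem~\ref{thm:main}. The first subtlety is that a naive Kempe swap on a list-coloring need not produce a valid list-coloring: swapping colors $a,b$ along a bicolored connected component $S$ requires $\{a,b\}\subseteq L(v)$ for every $v\in S$. I would therefore define the list version of flip dynamics to propose a Kempe component $(v,\{a,b\},S)$ with the same probability as in the coloring case and perform the swap only if the list condition is satisfied on all of $S$; otherwise the chain stays put. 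For $k\ge \Delta+2$ (which is implied by $k\ge(11/6-\eta)\Delta$ for $\Delta\ge 2$), this chain is ergodic on the set of proper $k$-list-colorings, reversible, and has uniform stationary distribution.

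Second, I would establish the analogue of Theorem~\ref{thm:main}: for the same bounded flip parameters $\mathbf{p}$ and the same $\eta=1/\const$, the list flip dynamics has mixing time at most $kn\log(4n)$. The proof should mirror the path coupling argument with the extremal-configuration-weighted metric from the proof of Theorem~\ref{thm:main}. Two list-colorings $\sigma,\tau$ at metric distance one differ at a single vertex $w$ with $\sigma(w),\tau(w)\in L(w)$; I couple by proposing the same triple $(v,\{a,b\},\ell)$ in both chains and executing the swap in each only when its own list condition holds. The key observation is that list blocking can only help: whenever a proposed swap is valid in one chain and rejected in the other due to a list violation, the contribution to the expected change in distance is non-positive compared to the all-lists-full case, since a rejected move cannot create new disagreements. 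Every term in Vigoda's enhanced inequality from the coloring proof either has a direct list-coloring counterpart or is dominated by it, so contraction at the same rate is preserved.

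Third, I would conclude by the flip-to-Glauber comparison used in~\cite{V00} and Chapter 14 of~\cite{LP}: since $\mathbf{p}$ is bounded, one step of flip dynamics can be simulated by $O(k\log k\cdot n)$ steps of Glauber dynamics by locally recoloring the vertices of a Kempe component one at a time. This transfers verbatim to the list setting because each intermediate single-vertex recoloring used in the simulation assigns a color that is already present in the original Kempe component at a neighbor, hence lies in the appropriate list by validity of the original list-coloring. Combining with the flip mixing bound yields the claimed $O((k\log k)\cdot n^2\log n)$ mixing time for Glauber dynamics on $k$-list-colorings.

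The main obstacle I anticipate is verifying that the delicate accounting of extremal (Type 1/Type 2) configurations that beats the $11/6$ barrier still balances when some swaps are blocked asymmetrically between $\sigma$ and $\tau$. A careless coupling could produce cases in which a swap is valid at $\sigma$ but blocked at $\tau$ (or vice versa), causing the count of extremal configurations to drift in the wrong direction even though the naive distance stays controlled. The technical work is to show that every such asymmetric-blocking case is charged to an already-favorable term in the expected-distance inequality, so the new metric continues to contract at the same rate as in the ordinary coloring setting.
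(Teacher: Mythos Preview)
Your overall strategy matches the paper's: define a list version of flip dynamics that swaps only \emph{flippable} Kempe components, redo the path-coupling/extremal-metric argument to get the list analogue of Theorem~\ref{thm:main}, and then transfer to Glauber via the Diaconis--Saloff-Coste comparison. Where your proposal goes wrong is the heuristic you rely on for the second step.

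The claim ``list blocking can only help, since a rejected move cannot create new disagreements'' is false in the relevant sense. Two concrete failures:
\begin{itemize}
\item The $-1$ in the bound $\nabla_H(\sigma,\tau,c)\le r\kappa-1$ comes from the singleton flip of $v$ to color $c$, which makes the two chains agree at $v$. If $c\notin L(v)$ this beneficial move is blocked, and the $-1$ is lost. The paper handles this by proving only $\nabla_H^L(\sigma,\tau,c)\le r\kappa^*$ for such $c$ and absorbing the loss into the global sum (the $-1$'s are needed only for the $k$ colors in $L(v)$, and $\sum_c r_c\le\Delta$ regardless).
\item In Vigoda's coupling the flip of the big component $S_\sigma(v,c)$ is paired with the flip of $S_\tau(w_{i_a},\sigma(v))$; this pairing is what makes $(a-a_{\max}-1)p_a$ appear rather than the larger $a_{\max}p_{a_{\max}}$. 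If $S_\sigma(v,c)$ is not flippable but some $S_\tau(w_i,\sigma(v))$ is, the pairing collapses: you get $a^L=0$ while $a^L_{\max}>0$, and the naive inequality is strictly worse than in the non-list case, not better. The paper deals with this by deleting an index $j$ with $a^L_j=0$, comparing to the resulting $(r-1)$-configuration, and bounding the leftover terms $(a-1)p_a+(b^L-2)p_{b^L}+b^L_jp_{b^L_j}\le\tfrac{19}{12}$ using the specific inequalities $ip_i\le 1$, $(i-1)p_i\le\tfrac13$, $(i-2)p_i\le\tfrac14$ satisfied by $\mathbf p^*$.
\end{itemize}
So the ``asymmetric blocking'' you flag as the main obstacle is genuinely where the work lies, and it is not resolved by a monotonicity argument; it needs the explicit case analysis above. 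By contrast, your intuition is correct for $\nabla_B$: the negative contributions there come from size-$1$ components, which are always flippable, so that part does go through essentially unchanged.

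A smaller point: in the comparison step, the reason intermediate single-vertex recolorings stay in the lists is precisely that only flippable components are swapped (so both colors lie in every list along $S$), not that ``the color is present at a neighbor''; having a neighbor colored $c$ says nothing about whether $c\in L(u)$.
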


To prove Theorems~\ref{thm:main_Glau} and ~\ref{thm:list_Glau}, we not only modify the flip parameters of Vigoda but we also use a different metric for path coupling.  As in most previous approaches to the study of the mixing time for Glauber dynamics, Vigoda's proof uses the Hamming metric on colorings. For our results, we use a modification of the Hamming metric. While alternative metrics have been used in the literature~\cite{BDK, BD98, LV}, those metrics have usually involved some expected ``stopping time'' and hence tend to be complicated to analyze.

Breaking from this past approach, we introduce a new kind of metric $d$: namely, the Hamming metric $d_H$ minus a small factor $d_B$ counting the number of ``non-extremal configurations'' around a vertex. 
The idea then is to prove that in just \emph{one} transition of the chain, either $d_H$ tends to decrease or $d_B$ tends to increase; in either case, this leads to a decrease in our new metric $d$. This ``extremal'' metric proves to be relatively easy to analyze, and hence we believe this concept will have fruitful applications for bounding the mixing times of other Markov chains.

As a final note, in order to prove Theorem~\ref{thm:list_Glau}, the list coloring analogue of Theorem~\ref{thm:main_Glau}, we introduce a notion of flip dynamics for list colorings wherein a Kempe component is flipped only if both colors appear in all lists of vertices of the component. While such a notion seems perfectly natural in hindsight, we are not aware of any such version in the literature.

\subsection{Structure of the paper}

Section~\ref{sec:main} is devoted to introducing notions and techniques needed for our proofs. In Section~\ref{sec:flip} we define the notion of a Kempe component and use this to describe flip dynamics. We review needed basic definitions and the path coupling technique in Section~\ref{sec:mixing}.

Our novel metric is constructed in Section~\ref{sec:metric}, where we also define extremal configurations. Informally speaking, given two colorings that differ only at a vertex $v$, we assign to each color $c\in [k]$ a configuration depending on the structure of the Kempe component containing $v$ and having color $c$. For a choice of flip parameters, some of these configurations become extremal; that is, if they are used as transitions of flip dynamics, the expected change on the distance between the coupled walks tends to increase. As an example, for the flip parameters used in~\cite{V00}, there are $6$ extremal configurations, up to symmetries.

We describe the coupling between adjacent colorings introduced in~\cite{V00} in Section~\ref{sec:coup} and we use it Section~\ref{sec:key} to state Theorem~\ref{thm:improvement}, which bounds the expected change of the path coupling in one step. In Section~\ref{sec:thm} we provide the proofs of our main results Theorems~\ref{thm:main_Glau} and~\ref{thm:main} assuming Theorem~\ref{thm:improvement}, which easily follow using path coupling.

Section~\ref{sec:change} is devoted to the proof of Theorem~\ref{thm:improvement}. As the metric $d$ is composed of two parts, we bound the contribution of the Hamming metric $d_H$ in Section~\ref{sec:nablaH} and the contribution of $d_B$ in Section~\ref{sec:nablaB} individually. 

The idea of Section~\ref{sec:nablaH} is to construct an LP to find an optimal set of flip parameters whose inequalities correspond to configurations. It turns out that there actually infinitely many optimal flip parameters for this program and that only two of the inequalities in this program are tight for every optimal solution. That is, there are always at least two extremal configurations. Thus we construct a second LP to minimize the remaining inequalities; this reduces the number of extremal configurations from $6$ to $2$ , which greatly simplifies the analysis of the path coupling in Section~\ref{sec:nablaB}.

For bounding the contribution of $d_B$ in Section~\ref{sec:nablaB}, the intuition is that if $N(v)$ contains many extremal configurations, then some of these are likely to flip to being non-extremal. More precisely, we show that the change from extremal to non-extremal is at least some constant factor of the change from non-extremal to extremal, wherein we carefully lower bound the first part using the structure of extremal configurations while upper bounding the second part with a more worst-case style of analysis.

Finally, the proof of Theorem~\ref{thm:list_Glau} is presented in Section~\ref{sec:list}, and we stress which parts are different from the non-list case and omit the ones that are analogous. We conclude with a number of open problems related to our work in Section~\ref{sec:open}.

\section{Flip dynamics, metric, and coupling}\label{sec:main}

In this section, we introduce flip dynamics, the metric $d$, and the coupling that we will use to prove our main result.

\subsection{Definition of flip dynamics}\label{sec:flip}

Let $\Omega:=[k]^n$ be the set of all colorings of $G$. Given $\sigma\in \Omega$, a path $(w_0,\dots, w_r)$ is \emph{$(c_1,c_2)$-alternating} if $\sigma(w_{j})= c_1$ for $j$ even and $\sigma(w_{j})= c_2$ for $j$ odd. 
A \emph{Kempe component} of $\sigma$ is a triplet $(c_1,c_2,S)$ where $c_1,c_2\in [k]$, and $S$ is a maximal non-empty subset of $V(G)$ such that for every $u,v\in S$ there exists a $(c_1,c_2)$-alternating path between $u$ and $v$. We slightly abuse notation and often identify the Kempe component $(c_1,c_2,S)$ with the set $S$. 
This definition is valid for proper and improper colorings; if $\sigma\in \Omega_0$, then Kempe components are maximal connected bicolored subgraphs. 
Also note that if $(c,c,S)$ is a Kempe component, then $S$ is the set of vertices of a maximal monochromatic connected subgraph.
We define the multiset
$$
\cK_\sigma:=\{S:\, (c_1,c_2,S) \text{ is a Kempe component}\}\;.
$$
Here it should be stressed that some components in $\cK_\sigma$ are taken with multiplicity. Namely, for each color $c$ that does not appear in the neighborhood of $u$, there exists a component $S=\{u\}$ in $\cK_\sigma$.  
Following the notation used in~\cite{V00}, we use $S_\sigma(u,c)$ to refer to the set $S$ in the Kempe component $(\sigma(u),c,S)$ with $u\in S$. Note that for a component $S$, there are exactly $|S|$ choices of  $(u,c)\in V(G)\times [k]$ such that $S=S_\sigma(u,c)$. It follows that $\sum_{S\in \cK_\sigma} |S| = kn$.

We say that a coloring $\sigma'$ is obtained from $\sigma$ by \emph{flipping} $S_\sigma(u,c)\in \cK_\sigma$ if
$$
\sigma'(v)=
\begin{cases}
c & \text{if }v\in S_\sigma(u,c) \text{ and }\sigma(v)=\sigma(u),\\
\sigma(u) & \text{if }v\in S_\sigma(u,c)  \text{ and }\sigma(v)=c,\\
\sigma(v) & \text{if }v\notin S_\sigma(u,c).
\end{cases}
$$
We denote by $\sigma_S$ the coloring obtained from $\sigma$ by flipping $S$.
Note that if $S=S_\sigma(u,c)$ with $\sigma(u)=c$, then $\sigma_S=\sigma$.

As described by Vigoda in~\cite{V00}, \emph{flip dynamics for $k$-colorings with flip parameters $\mathbf{p}=(p_1,p_2\dots)$} is a discrete-time Markov chain $(Y_t)$ with state space $\Omega$ and transitions between states given by swapping colors in Kempe components. In particular, if $Y_t=\sigma$, then \emph{flip dynamics} proceeds as follows.
\begin{enumerate}
\setlength\itemsep{0em}
\item Choose $u$ uniformly at random from $V(G)$.
\item Choose $c$ uniformly at random from $[k]$.
\item Let $S=S_{\sigma}(u,c)$ and $\ell= |S|$. With probability $p_\ell/\ell$, let $Y_{t+1}= \sigma_{S}$, 
otherwise, let $Y_{t+1}=\sigma$.
\end{enumerate}
For every $S\in \cK_\sigma\cup\{\emptyset\}$, define
$$
\bP_\sigma(S):=\bP(Y_{t+1}=\sigma_S\mid Y_t=\sigma)\;.
$$
If $(c_1,c_2,S)$ is a Kempe component of $\sigma$, then for each  $w\in S$ there exists a unique $c\in\{c_1,c_2\}$ such that $S=S_{\sigma}(w,c)$. Thus, if  $|S|=\ell$, then $\bP_\sigma(S)=p_\ell/kn$ and $\bP_\sigma(\emptyset)=1-\sum_{S\in \cK_\sigma} \bP_{\sigma}(S)$. 

The choice of $\mathbf{p}$ is crucial for the mixing properties of the chain. The flip parameters used by Vigoda in~\cite{V00} will be discussed in more detail in Lemma~\ref{lem:vig} and the values used in this paper can be found in Observation~\ref{obs}. 
If $p_\ell=\ell$ for every $\ell\geq 1$, flip dynamics can be understood as the Wang-Swendsen-Koteck\'{y} (WSK) algorithm for the anti-ferromagnetic Potts model at zero-temperature. The convergence properties of the WSK algorithm have received a lot of attention in the literature~\cite{LV05,WSK89,WSK90}. Throughout this paper, we will assume that $p_1=1$ and that  $p_{\ell+1}\leq p_{\ell}$, for every $\ell\geq 1$.

As a final remark, observe that while we defined flip dynamics over $\Omega_0$ in the introduction, here we define it over $\Omega$. The necessity of extending the chain to $\Omega$ will become apparent in the next section. Note that flipping a Kempe component in a proper coloring always produces a proper coloring; that is, $\Omega_0$ is a closed set of flip dynamics. Since $p_1>0$, flip dynamics embeds Glauber dynamics and thus it is ergodic on $\Omega_0$ for every $k\geq \Delta+2$. 
As every improper coloring has a positive probability to be eventually transformed into a proper one, $\Omega_0$ is the only closed subset of $\Omega$.
It follows that $(Y_t)$ converges to the uniform distribution on $\Omega_0$, denoted by $\pi$. Thus, an upper bound on the mixing time of $(Y_t)$ defined in $\Omega$ gives an upper bound on the mixing time of flip dynamics over $\Omega_0$.

\subsection{Mixing time and path coupling}\label{sec:mixing}

In this section we define the mixing time of a chain and describe the path coupling technique to obtain upper bounds on it. For any two probability distributions $\mu$ and $\nu$ supported on $\Omega$, we define its \emph{total variation distance} as
$$
\|\mu-\nu\|_{TV}=\max_{A\subseteq \Omega} |\mu(A)-\nu(A)|\;.
$$
Let $P$ be the transition matrix of flip dynamics and recall that $\pi$ is the unique stationary distribution of $(Y_t)$, which is uniform on $\Omega_0$. Define
$$
f(t)=\max_{\sigma \in\Omega} \|P^t(\sigma,\cdot)-\pi\|_{TV}\;,
$$
and
$$
t_{\text{mix}}(\epsilon)= \min \{t:\, f(t)\leq \epsilon\}\;.
$$
The \emph{mixing time} of the chain is defined as $t_{\text{mix}}:=t_{\text{mix}}\left(\frac{1}{4}\right)$.  We will denote the mixing time of Glauber dynamics by $t_{\Glau}$ and the mixing time of flip dynamics with flip parameters $\mathbf{p}$ by $t_{\flip (\textbf{p})}$.

For $i\in\{1,2\}$, let $S_i$ be a random variable over $\cK_i$ with probability distribution $\bP_i$. A \emph{coupling} of $S_1$ and $S_2$ is a joint random variable $(S_1,S_2)$ over $\cK_1\times \cK_2$ with probability distribution $ \bP_{12}$ whose marginal laws are the ones of $S_1$ and $S_2$, respectively; that is,
\begin{align*}
\sum_{S'\in \cK_2} \bP_{12}(S,S')=\bP_1(S)\;,\\
\sum_{S\in \cK_1} \bP_{12}(S,S')=\bP_2(S')\;.
\end{align*}

\medskip

A \emph{pre-metric} on $\Omega$ is a pair $(\Gamma,\omega)$ where $\Gamma$ is a connected, undirected graph with vertex set $\Omega$ and $\omega$ is a function that assigns positive, real-valued weights to edges such that for every $\sigma\tau \in E(\Gamma)$,  $\omega(\sigma \tau)$ is the minimum weight among all paths between $\sigma$ and $\tau$. 

Let $\boldsymbol{\varphi}=(\varphi_0,\varphi_1,\dots, \varphi_s)$ be a (simple) path in $\Gamma$. 
For any $\sigma',\tau'\in \Omega$, let $P_{\sigma', \tau'}$ denote the set of paths $\boldsymbol{\varphi}$ such that $\varphi_0=\sigma'$ and $\varphi_s=\tau'$.  
Let $d$ be the metric on $\Omega$ obtained by extending the pre-metric $(\Gamma,\omega)$ as follows: for every $\sigma',\tau'\in \Omega$, 
$$
d(\sigma',\tau') := \min_{\boldsymbol{\varphi}\in P_{\sigma',\tau'}} \sum_{i=1}^s \omega(\varphi_{i-1} \varphi_{i})\;.
$$

Path coupling was introduced by Bubley and Dyer~\cite{BD} to bound the mixing time of Markov chains. Here we will use the following version (see Lemma 3 in~\cite{FHY18}).
\begin{theorem}\label{thm:path coupling}[Bubley and Dyer~\cite{BD}]
Let $(\Gamma,\omega)$ be a pre-metric on $\Omega$ where $\omega$ takes values in $(0,1]$. 
Let $d$ be the metric obtained from $(\Gamma,\omega)$. 
Given $(Y_t)$ and $(Z_t)$ two copies of a chain with $Y_t Z_t\in E(\Gamma)$, let $(Y_{t+1},Z_{t+1})$ be a coupling of one step of the chain.
If there exists $\alpha>0$ such that for every $\sigma\tau\in E(\Gamma)$ one has
\begin{eqnarray}\label{premet}
\bE\left[d(Y_{t+1}, Z_{t+1})|Y_t=\sigma, Z_t=\tau \right] \leq (1-\alpha)\cdot  d(\sigma, \tau),
\end{eqnarray}
then
$$
t_{\text{mix}}\leq \frac{\log(4\,\diam(\Gamma))}{\alpha}\;,
$$
where $\diam(\Gamma)= \max_{\sigma',\tau'\in\Omega} d(\sigma',\tau')$.

\end{theorem}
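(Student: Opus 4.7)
The plan is a standard three-step path-coupling argument: first lift the per-edge contraction to all pairs in $\Omega$ using the pre-metric structure, then iterate to obtain exponential decay of $\bE[d(Y_t,Z_t)]$, and finally pass from expected distance to total variation via the coupling inequality.

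For the first step, given arbitrary $\sigma',\tau'\in\Omega$, I would choose a minimum-weight path $\boldsymbol{\varphi}=(\varphi_0,\ldots,\varphi_s)$ in $\Gamma$ with $\varphi_0=\sigma'$ and $\varphi_s=\tau'$, so that $d(\sigma',\tau')=\sum_{i=1}^s \omega(\varphi_{i-1}\varphi_i)$. Each consecutive pair $\varphi_{i-1}\varphi_i$ lies in $E(\Gamma)$, so the hypothesis supplies a one-step coupling contracting $d$ by factor $(1-\alpha)$. Gluing these pairwise couplings sequentially (condition on each intermediate marginal and then sample the next coordinate from the prescribed edge coupling) produces a joint distribution of $(W^{(0)}_{t+1},\ldots,W^{(s)}_{t+1})$ whose consecutive marginals match the prescribed edge couplings; setting $(Y_{t+1},Z_{t+1}):=(W^{(0)}_{t+1},W^{(s)}_{t+1})$ gives a valid coupling of one step of the chain from $(\sigma',\tau')$. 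Using the triangle inequality for $d$ together with $d(\varphi_{i-1},\varphi_i)=\omega(\varphi_{i-1}\varphi_i)$ from the pre-metric property,
$$
\bE[d(Y_{t+1},Z_{t+1})]\leq \sum_{i=1}^s \bE[d(W^{(i-1)}_{t+1},W^{(i)}_{t+1})]\leq (1-\alpha)\sum_{i=1}^s \omega(\varphi_{i-1}\varphi_i) = (1-\alpha)\,d(\sigma',\tau'),
$$
so the per-edge contraction~\eqref{premet} extends to arbitrary pairs.

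For the second step, iterating this extended contraction $t$ times from any starting pair yields
$$
\bE[d(Y_t,Z_t)]\leq (1-\alpha)^t d(Y_0,Z_0) \leq (1-\alpha)^t\,\diam(\Gamma)\leq e^{-\alpha t}\,\diam(\Gamma).
$$
For the third step, couple $(Y_t)$ started at arbitrary $\sigma$ with $(Z_t)$ started from the stationary distribution $\pi$, both defined on a common probability space via the extended coupling. Since $d$ is bounded below by $1$ on pairs of distinct states (normalizing so that the minimum edge weight is $1$, which is compatible with $\omega$ taking values in $(0,1]$), the coupling inequality combined with Markov's inequality gives
$$
\|P^t(\sigma,\cdot)-\pi\|_{TV}\leq \bP[Y_t\neq Z_t]\leq \bE[d(Y_t,Z_t)]\leq e^{-\alpha t}\,\diam(\Gamma).
$$
Taking the max over $\sigma$ and requiring this to be at most $1/4$ gives $t\geq \log(4\,\diam(\Gamma))/\alpha$, which is exactly the claimed mixing time bound.

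The main obstacle is the gluing step in the first part: one must verify that composing the pairwise couplings along the path produces a well-defined joint law whose extreme marginals $W^{(0)}_{t+1}$ and $W^{(s)}_{t+1}$ are each distributed as one step of the chain from $\sigma'$ and $\tau'$ respectively. This is a classical construction amounting to iterated conditioning on compatible marginals, but it is the only place in the proof that requires real care; once the coupling extension is in place, the exponential iteration and the coupling inequality are routine.
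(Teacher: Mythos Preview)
The paper does not supply its own proof of this theorem; it is quoted as a known result (attributed to Bubley--Dyer, with this particular formulation referenced to Lemma~3 of~\cite{FHY18}). Your outline is the standard path-coupling argument, and Steps~1 and~2 (extending the per-edge contraction along a geodesic via glued couplings, then iterating) are correct.

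Step~3 has a genuine gap. You claim that $d$ is bounded below by $1$ on distinct pairs, ``normalizing so that the minimum edge weight is $1$, which is compatible with $\omega$ taking values in $(0,1]$.'' This is confused: the hypothesis is precisely that $\omega\in(0,1]$, so the minimum edge weight $\omega_{\min}$ may be strictly less than $1$, and then $d(\sigma,\tau)<1$ for any edge $\sigma\tau$ realizing that minimum. Rescaling to force $\omega_{\min}=1$ would push $\omega$ out of $(0,1]$ and simultaneously rescale $\diam(\Gamma)$, changing the very bound you are trying to establish. The honest Markov step is
\[
\bP[Y_t\neq Z_t]=\bP\bigl[d(Y_t,Z_t)\geq \omega_{\min}\bigr]\leq \frac{e^{-\alpha t}\,\diam(\Gamma)}{\omega_{\min}},
\]
which yields $t_{\text{mix}}\leq \alpha^{-1}\log\bigl(4\,\diam(\Gamma)/\omega_{\min}\bigr)$ rather than the stated $\alpha^{-1}\log(4\,\diam(\Gamma))$. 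In the paper's own application this is harmless, since there $\omega\geq 1-\gamma>\tfrac{1}{2}$ and the extra factor sits as a bounded constant inside the logarithm; but as a proof of the theorem exactly as written, your third step does not close, and indeed the statement appears to require either the extra $\omega_{\min}$ term or a hypothesis bounding $\omega_{\min}$ away from zero by an absolute constant.
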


\subsection{A pre-metric for the set of colorings}\label{sec:metric}

In this section we define the pre-metric we will use. Let $\Gamma$ be the graph with vertex set $\Omega = [k]^n$ where two colorings are adjacent if and only if they differ at exactly one vertex. In particular, $\diam(\Gamma)=n$. Unless otherwise stated, $\sigma$ and $\tau$ will be $k$-colorings that differ in exactly one vertex (i.e. $\sigma\tau\in E(\Gamma)$); we will always denote this vertex by $v$.  If the use of $\sigma$ and $\tau$ is interchangeable, we will often use $\varphi\in \{\sigma,\tau\}$ and $\pi\in \{\sigma,\tau\}\setminus \{\varphi\}$.

Fix an arbitrary ordering $\prec$ of $V(G)$.  Given $c \in [k]$,  let  
$$
W=\left\{w_1, w_2, \ldots, w_r\right\}:= N(v) \cap \varphi^{-1}(c)\;,
$$
 with $w_1\prec w_2 \prec \dots \prec w_r$. 
We say $(\sigma,\tau)$ has an $r$-\emph{configuration $(a_1, a_2, \ldots, a_r; b_1, b_2, \ldots, b_r)$ for} $c$ if $|S_{\tau}(w_i, \sigma(v))| = a_i$ and $|S_{\sigma}(w_i, \tau(v))| = b_i$ for all $i \in [r]$.  As in~\cite{V00}, in order to avoid the multiplicity of a Kempe component $S\in \cK_\tau$ produced by containing multiple vertices of $W$, if $W\cap S=\{w_{i_1},\dots, w_{i_j}\}$ with $w_{i_1}\prec \dots \prec w_{i_j}$, we set $a_{i_1}=|S_\tau(w_{i_1},\sigma(v))|$ and $a_{i_2}=\dots=a_{i_j}=0$, and similarly for $\cK_\sigma$ and $b_i$. 
For the sake of convenience, we consider $p_0=0$.

In order to define $\omega$ we introduce the notion of an extremal configuration. The configurations $(2;1)$ and $(1;2)$ are called \emph{extremal $1$-configurations}, and the configurations $(3,3;1,1)$ and $(1,1;3,3)$ are called \emph{extremal $2$-configurations}.  We will see in Section~\ref{sec:change} why these configurations are of particular interest.

Define the following sets of colors, 
\begin{align*}
B^1_{\sigma, \tau}(v) &:= \left\{c \in [k]: (\sigma,\tau)\text{ has an extremal 1-configuration for }c   \right\},\\
B^2_{\sigma, \tau}(v) &:= \left\{c \in [k]:  (\sigma,\tau)\text{ has an extremal 2-configuration for }c  \right\}.
\end{align*}
Let $B_{\sigma,\tau}(v)= B^1_{\sigma, \tau}(v) \cup B^2_{\sigma, \tau}(v)$ be the set of colors $c$ such that $(\sigma,\tau)$ has an extremal configuration for $c$, and $\beta_{\sigma,\tau}(v)= (|B^1_{\sigma, \tau}(v)| +2 |B^2_{\sigma, \tau}(v)|)/\Delta$ be the proportion of neighbors of $v$ that participate in extremal configurations of $(\sigma,\tau)$.

Let $\gamma\in \left(0,\frac{1}{2}\right)$ be a sufficiently small constant to be fixed later. We define
\begin{align}\label{eq:omega}
\omega(\sigma, \tau) := 1 - \gamma(1- \beta_{\sigma,\tau}(v))\;.
\end{align}

Note that $\omega(\sigma,\tau)\in [1-\gamma,1]$. Since $\gamma<\frac{1}{2}$ and $\beta_{\sigma,\tau}(v)\leq 1$, every path containing at least two edges has weight greater than one. So every edge is a minimum weight path, implying that $(\Gamma,\omega)$ is a pre-metric. Let $d$ be the metric of $\Omega$ obtained from $(\Gamma,\omega)$. 

Let $d_H$ be the Hamming metric on $\Omega$; that is, for any $\sigma',\tau'\in \Omega$,
$$
d_H(\sigma',\tau'):=|\{u\in V(G): \sigma'(u)\neq \tau'(u)\}|\;.
$$
Since the underlying state space is $\Omega=[k]^n$, for every $\sigma',\tau'\in \Omega$ there exists a path between $\sigma'$ and $\tau'$ in $\Gamma$ of length $d_H(\sigma',\tau')$ in which every edge has weight at most $1$. It follows that $d(\sigma',\tau')\leq d_H(\sigma',\tau')$.

Define
\begin{align}\label{eq:d_B}
d_B(\sigma',\tau') := d_H (\sigma', \tau')- d(\sigma', \tau')\;.
\end{align}
In general, $d_B$ is not a metric, here we will only use that it is non-negative. The contribution of $d_B$ will be crucial for the constant improvement over $\frac{11}{6}$.

\subsection{A coupling of flip dynamics for adjacent states}\label{sec:coup}

In order to use Theorem~\ref{thm:path coupling}, we need to define a coupling of flip dynamics for adjacent states. 

For $\varphi\in \{\sigma,\tau\}\subseteq \Omega$, $\pi\in \{\sigma,\tau\}\setminus \{\varphi\}$, $c\in [k]$ and $\{w_1,\dots, w_r\}=N(v)\cap \varphi^{-1}(c)$, we define the set 
$$
\cA_\varphi(c) := \{S_\varphi(v,c), \{S_\varphi(w_i,\pi(v))\}_{i\in [r]}\}\;.
$$
As we discussed before, it might be the case that $S_\varphi(w_i,\pi(v))=S_\varphi(w_j,\pi(v))$ for $i\neq j$. As $A_\varphi(c)$ is a set, we only consider this component once.
Define the multisets $\cA_\varphi:=\{\cA_\varphi(c):\,c\in [k]\}$ and $\overline{\cA_\varphi}:=\cK_\varphi\setminus \cA_\varphi$. These are multisets as we would like to count the Kempe component $S_\sigma(u,c)$ for each $c\in [k]$ that does not appear in the neighborhood of $u$. Since $\sigma$ and $\tau$ can be improper, it is possible that $\cA_\varphi(\pi(v))\cap \cA_\pi(\varphi(v))\neq \emptyset$. This is a case that must be treated separately and we refer to~\cite{V00} for it.

In general, the components in $\cA_\sigma$ and $\cA_\tau$ are different. However, since $\sigma$ and $\tau$ coincide in $V(G)\setminus \{v\}$, each $S\in \overline{\cA_\varphi}$ is not affected by $\varphi(v)$ and $\overline{\cA_\sigma}=\overline{\cA_\tau}$.

Define $a=1+a_1+\dots+a_r$ and $b=1+b_1+\dots+b_r$.
If $r\geq 1$, then let $a_{\max}=\max_{i\in [r]} a_i$ and $b_{\max}=\max_{i\in [r]} b_i$. 
Let $i_a$ denote an index $i\in [r]$ such that $a_{i}=a_{\max}$ and $i_b$ denote an index $i\in [r]$ such that $b_i=b_{\max}$. Define 
\begin{align}\label{eq:q}
q_i =\begin{cases}
p_{a_{\max}}-p_a & \text{if } i=i_a,\\
p_{a_i}& \text{otherwise},
\end{cases}
\hspace{3cm}
q_i' =\begin{cases}
p_{b_{\max}}-p_b & \text{if } i=i_b,\\
p_{b_i}& \text{otherwise}.
\end{cases}
\end{align}
Note that $q_i$ and $q_i'$ are non-negative since $p_{\ell+1}\leq p_{\ell}$ for $\ell\geq 1$.

Let $S(\sigma)$ be a random variable on $\cK_{\sigma} \cup\{\emptyset\}$ with probability distribution $\bP_\sigma$. 
In~\cite{V00}, Vigoda introduced a coupling  $(S(\sigma),S(\tau))$ on $(\cK_\sigma\cup\{\emptyset\})\times (\cK_\tau\cup\{\emptyset\})$ with probability distribution $\bP_{\sigma\tau}$ defined as follows.
\begin{itemize}
\item[i.]   if $S\in \cA_\sigma(c)$ for some $c\in [k]$, then
\begin{itemize}
\item[a)]  if $S=S_\sigma(v,c)$ and $r=0$, then $S=S_\tau(v,c)$ and  $\bP_{\sigma\tau}(S,S)=p_1/kn$.
\item[b)]  otherwise,
\begin{itemize}
\item[-] if $S= S_\sigma(v,c)$: let $S'=S_\tau(w_{i_a},\sigma(v))$, then $\bP_{\sigma\tau}(S,S')=p_a/kn$ and $\bP_{\sigma\tau}(S,S'')=0$ for each $S''\in \cK_\tau\setminus \{S'\}$.

\item[-] if $S= S_\sigma(w_{i},\tau(v))$ for $i=i_b$: let $S'=S_\tau(v,c)$, then $\bP_{\sigma\tau}(S,S')=p_b/kn$. Let $S''=S_\tau(w_i,\sigma(v))$, if $q'_i\leq q_i$, then $\bP_{\sigma\tau}(S,S'')=q_i'/kn$ and $\bP_{\sigma\tau}(\emptyset,S'')=(q_i-q_i')/kn$; otherwise,  $\bP_{\sigma\tau}(S,S'')=q_i/kn$ and $\bP_{\sigma\tau}(S,\emptyset)=(q_i'-q_i)/kn$.

\item[-] if $S= S_\sigma(w_{i},\tau(v))$ for $i\in [r]\setminus\{i_b\}$ with $b_i\neq 0$:  let $S''=S_\tau(w_i,\sigma(v))$, if $q'_i\leq q_i$, then $\bP_{\sigma\tau}(S,S'')=q_i'/kn$ and $\bP_{\sigma\tau}(\emptyset,S'')=(q_i-q_i')/kn$; otherwise,  $\bP_{\sigma\tau}(S,S'')=q_i/kn$  and $\bP_{\sigma\tau}(S,\emptyset)=(q_i'-q_i)/kn$.
\end{itemize}
\end{itemize}

\item[ii.] if $S\in \overline{\cA_\sigma}$: since $S\in \overline{\cA_\tau}$, then $\bP_{\sigma\tau}(S,S)=p_\ell/kn$, where $|S|=\ell$.

\end{itemize}

Given that $Y_t=\sigma$ and $Z_t=\tau$, the coupling $(S(\sigma),S(\tau))$ induces a coupling $(Y_{t+1},Z_{t+1})$ by setting $Y_{t+1}=\sigma_{S(\sigma)}$ and  $Z_{t+1}=\tau_{S(\tau)}$. This is the coupling we will use to prove Theorem~\ref{thm:improvement}.

\subsection{The key result}\label{sec:key}

Given the coupling introduced in the previous section, define the rescaled contributions to the expected change of $d_H$ and $d_B$ as
\begin{align*}
\nabla_H(\sigma, \tau) &:= kn\,\bE\left[d_H(Y_{t+1}, Z_{t+1})-d_H(\sigma, \tau) |Y_t=\sigma, Z_t=\tau\right] \\
&= kn\sum_{S\in\cK_\sigma\cup\{\emptyset\}\atop S'\in \cK_\tau\cup\{\emptyset\}} \bP_{\sigma \tau}(S,S') (d_H(\sigma_{S}, \tau_{S'})-d_H(\sigma, \tau)) \;,\\[0.3cm]
\nabla_B(\sigma, \tau) &:= -kn \,\bE\left[d_B(Y_{t+1}, Z_{t+1}) - d_B(\sigma, \tau) |Y_t=\sigma, Z_t=\tau\right]\\
&=- kn\sum_{S\in\cK_\sigma\cup\{\emptyset\}\atop S'\in \cK_\tau\cup\{\emptyset\}} \bP_{\sigma \tau}(S,S') (d_B(\sigma_S, \tau_{S'})-d_B(\sigma, \tau)) \;.
\end{align*}
The rescaling factor $kn$ is natural as the probability a Kempe component of size $\ell$ is flipped is exactly $p_\ell/kn$.

The total rescaled expected change can be written as
\begin{align}
\nabla(\sigma, \tau) &:= \nabla_H(\sigma, \tau)+\nabla_B(\sigma, \tau)= kn\bE\left[d(Y_{t+1}, Z_{t+1})-d(\sigma, \tau) |Y_t=\sigma, Z_t=\tau\right]\;. \label{eq:sum}
\end{align}

The crux of the argument to prove Theorem~\ref{thm:main} lies in showing that the expected change $\nabla(\sigma, \tau)$ is negative, as in the following theorem.
\begin{theorem}\label{thm:improvement}
There exists a bounded $\mathbf{p}$ such that if $k\geq \left(\frac{11}{6}-\eta\right)\Delta$, with $\eta=\frac{1}{\const}$, then for every $\sigma\tau\in E(\Gamma)$, the coupling defined in Section~\ref{sec:coup} satisfies 
$$
\nabla(\sigma,\tau)\leq -1\;.
$$
\end{theorem}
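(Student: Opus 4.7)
The plan is to prove Theorem~\ref{thm:improvement} by decomposing $\nabla = \nabla_H + \nabla_B$ via (\ref{eq:sum}), bounding each term separately, and combining. First I would pin down the flip parameters $\mathbf{p}$ through the two-stage linear programming sketched in the paper's outline for Section~\ref{sec:nablaH}. In the primary LP, the variables are $p_1=1,p_2,\dots$; for each possible configuration type one bounds the per-color contribution to $\nabla_H$ by a linear form in $\mathbf{p}$, and one solves for the largest $k/\Delta$ at which the sum is non-positive. This recovers Vigoda's $11/6$ bound with six tight constraints. A secondary LP then minimizes the slack of the non-extremal configurations subject to primary optimality, producing a bounded $\mathbf{p}$ under which only the two families of extremal configurations remain tight.

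For $\nabla_H$ I would follow Vigoda's color-by-color accounting under the coupling of Section~\ref{sec:coup}: write $\nabla_H(\sigma,\tau) = \sum_{c\in[k]} \Delta_c$, exhaustively handle each configuration case, and invoke the LP optimality of $\mathbf{p}$. Since $k\ge(11/6-\eta)\Delta$, the primary LP slack converts into a uniformly negative contribution of order $\eta\Delta$, while the tight extremal configurations may still make positive contributions. This should yield an estimate
\begin{equation*}
\nabla_H(\sigma,\tau) \;\le\; -c_1\,\eta\,\Delta \;+\; c_2\,\beta_{\sigma,\tau}(v)\,\Delta,
\end{equation*}
with explicit constants $c_1,c_2>0$ read off from $\mathbf{p}$.

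The crux is $\nabla_B$. Since $d_B = d_H - d$ and $\omega(\sigma,\tau) = 1 - \gamma(1-\beta_{\sigma,\tau}(v))$, the change in $d_B$ in one coupled step is driven by how the flip reshapes the set $B_{\sigma,\tau}(v)$, or the analogous set at a newly created disagreement vertex. Following the strategy announced in Section~\ref{sec:nablaB}, I would show that for each color $c\in B_{\sigma,\tau}(v)$ there are specific Kempe flips of bounded size which, under the coupling $(S(\sigma),S(\tau))$, destroy the extremality of $c$ and thereby contribute an expected gain in $d_B$ of order $\gamma/(kn)$. Summing over $c\in B_{\sigma,\tau}(v)$ yields a total gain of order $\gamma\,\beta_{\sigma,\tau}(v)\,\Delta/(kn)$. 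I would then upper bound the reverse effect, a non-extremal color becoming extremal, by a strictly smaller constant multiple of the same quantity via a structural counting argument. The outcome is an estimate
\begin{equation*}
\nabla_B(\sigma,\tau) \;\le\; -c_3\,\gamma\,\beta_{\sigma,\tau}(v)\,\Delta \;+\; c_4\,\gamma,
\end{equation*}
with $c_3>0$ and $c_4=O(1)$.

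Summing the two bounds yields
\begin{equation*}
\nabla(\sigma,\tau) \;\le\; -c_1\eta\Delta \;+\; (c_2-c_3\gamma)\,\beta_{\sigma,\tau}(v)\,\Delta \;+\; c_4\,\gamma .
\end{equation*}
Choosing $\gamma\in(0,1/2)$ small enough that $c_2-c_3\gamma\le 0$ and then $\eta=1/\const$ so that $c_1\eta\Delta$ beats $1+c_4\gamma$ (the finitely many small-$\Delta$ cases being absorbed into the constants) delivers $\nabla(\sigma,\tau)\le -1$. The main obstacle will be the $\nabla_B$ estimate: a single Kempe flip can simultaneously modify the configurations of many colors at $v$, so one must cleanly separate genuine destruction of extremal configurations from mere rearrangements among non-extremal types, and also handle the case where the flip relocates the unique disagreement vertex.
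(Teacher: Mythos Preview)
Your overall architecture matches the paper's: decompose $\nabla=\nabla_H+\nabla_B$, fix $\mathbf{p}$ via the two-stage LP, and play the $(1-\beta)$ gain from $\nabla_H$ against the $\beta$ gain from $\nabla_B$. However, the way you set up the $\nabla_H$ bound, and hence the final parameter choice, is backwards.

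You write that ``the primary LP slack converts into a uniformly negative contribution of order $\eta\Delta$'' and arrive at $\nabla_H\le -c_1\eta\Delta+c_2\beta\Delta$. But the primary LP only gives Vigoda's bound $\nabla_H\le \tfrac{11}{6}\Delta-k$, which at $k=(\tfrac{11}{6}-\eta)\Delta$ is $+\eta\Delta$, a \emph{loss}. The genuine gain comes from the \emph{secondary} LP and lives only on the non-extremal colors: each non-extremal configuration satisfies $\nabla_H(\sigma,\tau,c)\le \kappa^* r-1$ with $\kappa^*=\tfrac{161}{88}<\tfrac{11}{6}$, while each extremal color only gives $\tfrac{11}{6}r-1$. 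Summing yields
\[
\nabla_H(\sigma,\tau)\;\le\;\Bigl(\tfrac{11}{6}-\varepsilon\,(1-\beta_{\sigma,\tau}(v))\Bigr)\Delta-k,\qquad \varepsilon=\tfrac{11}{6}-\tfrac{161}{88},
\]
so the Hamming gain is $\varepsilon(1-\beta)\Delta$ and \emph{vanishes} at $\beta=1$; there is no uniform $-c_1\eta\Delta$.

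This flips the logic of your last paragraph. You propose ``$\gamma$ small enough that $c_2-c_3\gamma\le 0$'', but making $\gamma$ smaller makes $c_3\gamma$ smaller and the inequality harder, not easier. In the paper the roles are: the positive part of $\nabla_B$ (non-extremal becoming extremal, plus the $d_B$ loss when flips touch $\cA_\varphi$) is of order $\gamma(1-\beta)\Delta$, not $O(\gamma)$, and it must be dominated by the Hamming gain $\varepsilon(1-\beta)\Delta$; this forces $\gamma$ to be a specific small constant (the paper takes $\gamma=\varepsilon\Delta/(53k)$). Only after that does the good part of $\nabla_B$, of order $-\gamma\beta\Delta$, determine $\eta$: at $\beta=1$ it must beat the $+\eta\Delta$ deficit from $\nabla_H$, giving $\eta\approx \varepsilon/318$. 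So the dependency is $\varepsilon\Rightarrow\gamma\Rightarrow\eta$, not the other way around. Rewriting your two displayed estimates with the $(1-\beta)$ factor in the right place (and with the positive $\nabla_B$ term scaling with $(1-\beta)\Delta$ rather than a bare $c_4\gamma$) will make the combination go through exactly as in the paper.
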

The choice of $-1$ in the theorem is arbitrary, and proving that $\nabla(\sigma,\tau)\leq c$ for any $c<0$ would be enough to show that the mixing time of flip dynamics is $O(n\log{n})$.

\section{Proofs of Theorems~\ref{thm:main_Glau} and~\ref{thm:main}}\label{sec:thm}
We now proceed with the proofs of our main results modulo Theorem~\ref{thm:improvement}, which we will prove in the next section.  We first analyze the mixing time of flip dynamics for $k$-colorings with flip parameters $\mathbf{p}$. The explicit values of the flip parameters will be set in Section~\ref{sec:nablaH}.
\begin{proof}[Proof of Theorem~\ref{thm:main}]
Let $\Omega_0 \subset \Omega = [k]^n$ be the set of proper $k$-colorings, and let $\sigma,\tau\in \Omega_0$ with $\sigma\tau\in E(\Gamma)$. Consider the metric $d$ on $\Omega$ obtained by extending the pre-metric $(\Gamma,\omega)$ defined in Section~\ref{sec:metric}. By Theorem~\ref{thm:improvement}, there exist a bounded $\mathbf{p}$ and $\eta>0$ such that if $k\geq \left(\frac{11}{6}-\eta\right)\Delta$, then flip dynamics for $k$-colorings with flip parameters $\mathbf{p}$ satisfies
 \begin{align*}
\bE\left[d(Y_{t+1}, Z_{t+1})|Y_t=\sigma, Z_t=\tau \right] \leq  d(\sigma, \tau) - \frac{1}{kn} \leq \left(1-\frac{1}{kn}\right) d(\sigma, \tau)\;.
\end{align*}
We can apply Theorem~\ref{thm:path coupling} with $\alpha=\frac{1}{kn}$ to conclude that the mixing time of the chain satisfies
$$
t_{\flip (\mathbf{p})}\leq kn\log{(4n)}\;.
$$
\end{proof}

Using the comparison theorem of Diaconis and Saloff-Coste~\cite{DSC}, Vigoda~\cite{V00} showed that for a particular choice of flip parameters $\mathbf{p}_{\textrm{Vig}}$ (see Lemma~\ref{lem:vig} for precise values) and $k\geq \frac{11}{6}\Delta$, the mixing time of flip dynamics for $k$-colorings with flip parameters $\mathbf{p}_{\textrm{Vig}}$ can be used to bound the mixing time of  Glauber dynamics for $k$-colorings. The proof relies on the fact that  $\mathbf{p}_{\textrm{Vig}}$ is bounded. 
It is straightforward to generalize Vigoda's result to other bounded $\mathbf{p}$ and smaller values of $k$.
\begin{theorem}\label{thm:trans}[Vigoda~\cite{V00}]
For every $\varepsilon>0$, bounded $\mathbf{p}$ and $k \geq (1+\varepsilon) \Delta$, the mixing time of Glauber dynamics for $k$-colorings  satisfies
$$
t_{\Glau} = O\left(n\log{k} \cdot t_{\flip(\mathbf{p})} \right)\;.
$$
\end{theorem}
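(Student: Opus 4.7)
The plan is to apply the Diaconis--Saloff-Coste comparison theorem to bound the spectral gap of Glauber dynamics in terms of that of flip dynamics with parameters $\mathbf{p}$, and then translate the spectral estimate back into a mixing-time bound. Both chains are reversible on $\Omega_0$ with the same uniform stationary distribution $\pi$, so the comparison machinery applies directly.

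The core construction is a family of canonical Glauber paths, one for every flip transition $(\sigma,\sigma_S)$ with $\bP_\sigma(S)>0$. Since $\mathbf{p}$ is bounded, $|S|\leq \ell_0$ for a constant $\ell_0$ independent of $n$. First I would fix a BFS order on the vertices of $S$ rooted at an arbitrary vertex and process them in two passes. In the first pass each $u\in S$ is moved from its current color to a ``parking'' color chosen outside the current colors of $N(u)\cup\{c_1,c_2\}$, where $c_1,c_2$ are the two colors of the Kempe component; such a color exists because $k\geq (1+\varepsilon)\Delta$, and I would fix it by a deterministic rule such as the smallest absent color. In the second pass each vertex is moved from its parking color to its target color under $\sigma_S$; maximality of $S$ ensures the target color does not appear in any external neighbor, while among the internal neighbors of $u$, unprocessed ones still hold parking colors and already-processed ones hold the opposite Kempe color by the bipartite-alternating structure of the component. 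Hence every intermediate coloring is proper, and the path has length at most $2\ell_0$.

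Second, I would bound the congestion $A$ of this family of paths. Given a Glauber edge $(\tau,\tau')$, any flip transition whose canonical path passes through $(\tau,\tau')$ is identified by $(\tau,\tau')$ together with a bounded amount of side information: the step index along the path, the pass, and the identity of $S$, whose vertices all lie within graph-distance $\ell_0$ of the single vertex where $\tau$ and $\tau'$ differ. Combining this enumeration with the bound $p_\ell/(kn)\leq 1/(kn)$ on the ratio of transition probabilities yields a congestion $A=A(\mathbf{p},\varepsilon)$ independent of $n$, so the comparison theorem gives $\mathrm{gap}(\Glau)=\Omega(\mathrm{gap}(\flip(\mathbf{p})))$.

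Finally, chaining the standard inequalities $t_{\Glau}\leq \mathrm{gap}(\Glau)^{-1}\log(4/\pi_{\min})$, $\log(1/\pi_{\min})\leq n\log k$ (using $|\Omega_0|\leq k^n$), and $\mathrm{gap}(\flip(\mathbf{p}))^{-1}=O(t_{\flip(\mathbf{p})})$ yields the desired $O(n\log k\cdot t_{\flip(\mathbf{p})})$ bound. The hard part will be the parking step of the path construction: one must verify that a parking color avoiding both Kempe colors and all currently used neighbor colors exists at every point of the first pass, which is exactly where the hypothesis $k\geq (1+\varepsilon)\Delta$ is used in an essential way. The remainder is routine comparison-theorem bookkeeping.
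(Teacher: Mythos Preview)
Your proposal is correct and follows the same approach the paper defers to: the paper does not give its own proof of this theorem but simply cites Vigoda's argument via the Diaconis--Saloff-Coste comparison theorem and remarks that the extension to arbitrary bounded $\mathbf{p}$ and to $k\geq(1+\varepsilon)\Delta$ is straightforward. Your outline---canonical Glauber paths of length $O(\ell_0)$ simulating each flip move, a parking-color step relying on $k>\Delta+2$, a congestion bound depending only on $\ell_0$ and $\varepsilon$, and the standard chain of inequalities $t_{\Glau}\lesssim \mathrm{gap}(\Glau)^{-1}\log(1/\pi_{\min})\lesssim A\,\mathrm{gap}(\flip)^{-1}\,n\log k\lesssim A\,t_{\flip(\mathbf{p})}\,n\log k$---is exactly that argument.
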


Our main result easily follows from our result on flip dynamics and the previous theorem.
\begin{proof}[Proof of Main Theorem~\ref{thm:main_Glau}]
Let $\mathbf{p}$ be the flip parameters given by Theorem~\ref{thm:main}, thus $$t_{\flip (\mathbf{p})}\leq kn\log(4n).$$
Since $\mathbf{p}$ is bounded, we can apply Theorem~\ref{thm:trans} with $\varepsilon = \frac{5}{6}-\eta$, so $k\geq \left(\frac{11}{6}-\eta\right)\Delta = (1+\epsilon)\Delta$, to obtain that
$$
t_{\Glau} = O\left(\left(k\log{k}\right)\cdot n^2 \log{n} \right)\;.
$$
\end{proof}

\section{Proof of Theorem~\ref{thm:improvement}}\label{sec:change}

In this section, we prove Theorem~\ref{thm:improvement}. We will analyze the contributions from the Hamming metric $\nabla_H$ and the remainder $\nabla_B$ separately. 

\subsection{Contribution of $\nabla_H$ and choice of flip parameters}\label{sec:nablaH}

The core of Vigoda's argument in~\cite{V00} relies on bounding the expected change of the Hamming distance in one step of the coupling defined in Section~\ref{sec:coup}. In this section, we briefly describe Vigoda's analysis and refine his upper bound on $\nabla_H(\sigma,\tau)$ in terms of $\beta_{\sigma,\tau}(v)$.

\begin{lemma}[Lemma~5 in~\cite{V00}]\label{lem:vig}
Let $\textbf{p}_{\textrm{Vig}}=\left(1,\frac{13}{42},\frac{1}{6},\frac{2}{21},\frac{1}{21},\frac{1}{84},0 ,0 ,\dots\right)$.  
 Then, for every $\sigma \tau\in E(\Gamma)$, the coupling described in Section~\ref{sec:coup} satisfies
\begin{align*}
\nabla_H(\sigma, \tau) &\leq \frac{11}{6}\cdot \Delta-k\;.
\end{align*}
\end{lemma}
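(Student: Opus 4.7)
The plan is to follow Vigoda's original argument from~\cite{V00}: decompose $\nabla_H(\sigma,\tau)$ into per-color contributions, read the expected $d_H$-change directly off the coupling, and verify the resulting linear inequalities for $\mathbf{p}_{\textrm{Vig}}$. Since $\sigma$ and $\tau$ differ only at $v$, any flip of a component $S \in \overline{\cA_\sigma} = \overline{\cA_\tau}$ produces identical changes in both chains and so contributes nothing to $\nabla_H$. This yields the decomposition
\[
\nabla_H(\sigma,\tau) \;=\; \sum_{c \in [k]} \delta(c),
\]
where $\delta(c)$ is the rescaled expected change due to the pairs $(S,S')$ with $S \in \cA_\sigma(c) \cup \{\emptyset\}$ and $S' \in \cA_\tau(c) \cup \{\emptyset\}$; the degenerate case $\cA_\sigma(\tau(v)) \cap \cA_\tau(\sigma(v)) \neq \emptyset$ (where the two ``$v$-containing'' components coincide) is handled separately as in~\cite{V00}.

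Next, I would fix a color $c$ with $r$-configuration $(a_1,\ldots,a_r;b_1,\ldots,b_r)$ and compute $\delta(c)$ by reading off the joint probabilities from Section~\ref{sec:coup}. The nontrivial contributions come from the flip of $S_\sigma(v,c)$ paired with $S_\tau(w_{i_a},\sigma(v))$, the symmetric flip of $S_\sigma(w_{i_b},\tau(v))$ paired with $S_\tau(v,c)$, and the same-side pairings $S_\sigma(w_i,\tau(v)) \leftrightarrow S_\tau(w_i,\sigma(v))$ for the remaining indices, each carrying coefficients in terms of $q_i$, $q_i'$ and $|q_i-q_i'|$ from~\eqref{eq:q}. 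Collecting these, one writes $\delta(c) = -p_1 + R_c$, where the $-p_1$ captures the singleton $\{v\}$-flip that resolves the $v$-disagreement, and $R_c$ is an explicit linear combination of $p_1,\ldots,p_6$ whose coefficients depend only on $r$ and on the profile $(a_i,b_i)$.

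Finally I would verify that $\mathbf{p}_{\textrm{Vig}}$ satisfies $R_c \le \tfrac{11}{6}\,r$ for every admissible configuration. Summing over $c$ and using $\sum_c r_c = \deg(v) \le \Delta$ gives
\[
\nabla_H(\sigma,\tau) \;=\; -k\,p_1 + \sum_{c\in[k]} R_c \;\le\; -k + \tfrac{11}{6}\Delta,
\]
as required. Because $p_\ell = 0$ for $\ell \ge 7$, only configurations with $\max(a,b,a_i,b_i) \le 6$ can have $R_c \neq 0$, so this reduces to a finite (but lengthy) case analysis. This enumeration is the main obstacle and constitutes the bulk of Vigoda's original work: the rational values $1, \tfrac{13}{42}, \tfrac{1}{6}, \tfrac{2}{21}, \tfrac{1}{21}, \tfrac{1}{84}$ are tuned so that each of the inequalities $R_c \le \tfrac{11}{6}r$ holds, with equality attained precisely on the extremal configurations defined in Section~\ref{sec:metric} (the $1$-configurations $(2;1),(1;2)$ and the $2$-configurations $(3,3;1,1),(1,1;3,3)$). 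This tightness at the extremal configurations is exactly what motivates the refined analysis in terms of $\beta_{\sigma,\tau}(v)$ to follow.
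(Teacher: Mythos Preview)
Your outline follows the paper's approach (decompose $\nabla_H$ by color via~\eqref{eq:Hamm}, bound each $\nabla_H(\sigma,\tau,c)$ by $\tfrac{11}{6}r-1$, and sum), but the central verification step has a real gap. The claim that $p_\ell=0$ for $\ell\ge 7$ forces $R_c=0$ whenever $\max(a,b,a_i,b_i)>6$ is false: for instance the $1$-configuration $(100;1)$ gives $\delta(c)=29/42$, hence $R_c=71/42\ne 0$. More importantly, $r$ itself is unbounded, so no genuinely finite enumeration is available. The paper (following~\cite{V00}) handles this structurally: Lemma~\ref{lem:comp}(iv) shows that for $r\ge 3$ one has $\nabla_H(\sigma,\tau,c)\le (a-2a_{\max})p_a+(b-2b_{\max})p_b+r(p_1+2p_3)$, and since $p_1+2p_3=\tfrac{4}{3}<\tfrac{11}{6}$ the target $\tfrac{11}{6}r-1$ follows uniformly in $r$ (Lemma~\ref{lem:trips}). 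Only $r\in\{1,2\}$ are checked directly, and even there the reduction to finitely many cases uses inequalities like $ip_i\le 1$ and $(i-1)p_i\le 2p_3$, not merely the vanishing of $p_\ell$ for $\ell\ge 7$.

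Two smaller points. Your identification of the tight configurations for $\mathbf{p}_{\textrm{Vig}}$ is incorrect: as the paper states after Lemma~\ref{lem:trips}, there are six $\mathbf{p}_{\textrm{Vig}}$-extremal configurations up to symmetry, namely $(2;1),(3;1),(4;1),(5;1),(2,2;1,1),(3,3;1,1)$. The pair $(2;1)$ and $(3,3;1,1)$ that you name are the extremal ones for the \emph{new} parameters $\mathbf{p}^*$ of Observation~\ref{obs}; cutting six down to two is precisely the reason $\mathbf{p}^*$ is introduced. Finally, the decomposition $\delta(c)=-p_1+R_c$ with the $-p_1$ interpreted as a ``singleton $\{v\}$-flip'' is only literally correct when $r=0$; for $r\ge 1$ the component $S_\sigma(v,c)$ is not a singleton, and the clean target is simply $\nabla_H(\sigma,\tau,c)\le \tfrac{11}{6}r-1$.
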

Recall that $p_0=0$. Vigoda's proof measures the contribution of each color to the expected change. 
For each $c\in [k]$, we define
\begin{align*}
\nabla_H(\sigma, \tau,c) &:= kn \sum_{S\in \cA_{\sigma}(c)\cup\{\emptyset\} \atop S'\in \cA_{\tau}(c)\cup\{\emptyset\}}\bP_{\sigma\tau}(S,S')(d_H(\sigma_S, \tau_{S'})-d_H(\sigma, \tau))\;.
\end{align*}
Consider also the remaining contribution,
\begin{align*}
\overline{\nabla_H}(\sigma, \tau) &:= kn \sum_{S\in \overline{\cA_{\varphi}}}\bP_{\sigma\tau}(S,S)(d_H(\sigma_S, \tau_S)-d_H(\sigma, \tau))\;.
\end{align*}
By the definition of the coupling, the Hamming distance does not change if $S\in \overline{\cA_\varphi}$, so $\overline{\nabla_H}(\sigma, \tau)=0$. Thus,
\begin{align}\label{eq:Hamm}
\nabla_H(\sigma, \tau) = \overline{\nabla_H}(\sigma, \tau) + \sum_{c\in [k]} \nabla_H(\sigma,\tau,c)=\sum_{c\in [k]} \nabla_H(\sigma,\tau,c)\;.
\end{align}
It suffices to bound $\nabla_H(\sigma,\tau,c)$. Lemma~\ref{lem:vig} follows directly from the choice of $p_\ell$ and the result from Vigoda's paper.

\begin{lemma}[see Eq.~(1) in~\cite{V00}]\label{lem:vig2}
 Suppose that $(\sigma,\tau)$ has an  $r$-configuration $(a_1,\dots,a_r;b_1,\dots,b_r)$ for $c$. If $r=0$, then  
\begin{align*}
\nabla_H(\sigma, \tau,c) =-1\;,
\end{align*}
and if $r\geq 1$, then
\begin{align}\label{eq:bound nabla_H}
\nabla_H(\sigma, \tau,c) \leq (a-a_{\max}-1)p_a+(b-b_{\max}-1)p_b+\sum_{i\in [r]} (a_i q_i+ b_i q_i'-\min\{q_i,q_i'\})\;.
\end{align}
\end{lemma}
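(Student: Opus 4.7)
The plan is to split by whether $r=0$ or $r\geq 1$, and in each case enumerate the non-zero entries of the coupling $\bP_{\sigma\tau}$ restricted to $(\cA_\sigma(c)\cup\{\emptyset\})\times(\cA_\tau(c)\cup\{\emptyset\})$, compute the change $d_H(\sigma_S,\tau_{S'})-d_H(\sigma,\tau)$ for each pair, and sum. The case $r=0$ is immediate: no neighbor of $v$ has color $c$, so $S_\sigma(v,c)=S_\tau(v,c)=\{v\}$ and $\cA_\sigma(c)=\cA_\tau(c)=\{\{v\}\}$. The only non-zero entry in this block, supplied by Case~(i)(a) of Section~\ref{sec:coup}, pairs the two singletons with weight $p_1/kn=1/kn$; both flips recolor $v$ to $c$, so the Hamming distance drops from $1$ to $0$, and the rescaled contribution is $-1$.

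For $r\geq 1$, I would invoke Case~(i)(b) of Section~\ref{sec:coup} to enumerate three families of coupling entries: (A) $(S_\sigma(v,c),\,S_\tau(w_{i_a},\sigma(v)))$ of weight $p_a/kn$; (B) $(S_\sigma(w_{i_b},\tau(v)),\,S_\tau(v,c))$ of weight $p_b/kn$; and (C) for each $i$ with $b_i\neq 0$, a diagonal pair $(S_\sigma(w_i,\tau(v)),\,S_\tau(w_i,\sigma(v)))$ of weight $\min\{q_i,q_i'\}/kn$, plus one residual pair of weight $|q_i-q_i'|/kn$ in which the larger side is replaced by $\emptyset$. For each pair I would then use the fact that $\sigma$ and $\tau$ agree off $v$: a vertex $u\neq v$ becomes a disagreement in $(\sigma_S,\tau_{S'})$ if and only if $u\in S\triangle S'$, while $v$'s status depends on which flips touch it.

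Accounting family by family: for (A), one verifies $S_\tau(w_{i_a},\sigma(v))\subseteq S_\sigma(v,c)$ by observing that any $(c,\sigma(v))$-alternating path in $\tau$ starting at $w_{i_a}$ cannot pass through $v$, since generically $\tau(v)\notin\{c,\sigma(v)\}$; this yields $|S\cap S'|=|S'|=a_{\max}$, the $\sigma$-side alone recolors $v$ to $c\neq\tau(v)$ so $v$ remains a disagreement, and $d_H(\sigma_S,\tau_{S'})-1=(a-a_{\max})-1$, contributing $p_a(a-a_{\max}-1)$ after the $kn$ rescaling. Family (B) is symmetric and contributes $p_b(b-b_{\max}-1)$. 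For (C), $v$ is fixed by both flips and remains a disagreement; the diagonal pair creates disagreements exactly on $(S\cup S')\setminus\{v\}$, giving $d_H(\sigma_S,\tau_{S'})-1\leq a_i+b_i-1$ via $w_i\in S\cap S'$, while each residual pair creates $a_i$ or $b_i$ fresh disagreements alone. Splitting into the subcases $q_i\geq q_i'$ and $q_i\leq q_i'$ and adding the diagonal and residual contributions collapses to $a_iq_i+b_iq_i'-\min\{q_i,q_i'\}$. Summing over all entries yields~\eqref{eq:bound nabla_H}; the exceptional case $\cA_\sigma(\tau(v))\cap\cA_\tau(\sigma(v))\neq\emptyset$ is handled separately in~\cite{V00} and respects the same bound.

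The main obstacle is the case (C) bookkeeping: the components $S_\sigma(w_i,\tau(v))$ and $S_\tau(w_i,\sigma(v))$ involve different color pairs because $\sigma(v)\neq\tau(v)$, so they need not coincide beyond $w_i$, and one must carefully argue that $w_i\in S\cap S'$ is the only universally guaranteed overlap before telescoping the residual masses into the $-\min\{q_i,q_i'\}$ correction. This accounting is precisely the content of Eq.~(1) in~\cite{V00}, which I would cite directly rather than reprove from scratch.
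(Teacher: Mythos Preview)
Your argument is correct and is precisely the computation underlying Eq.~(1) in~\cite{V00}; the paper itself does not supply a proof of this lemma but simply quotes it from Vigoda. Your enumeration of the three coupling families (A), (B), (C), the containment $S_\tau(w_{i_a},\sigma(v))\subseteq S_\sigma(v,c)$ for the generic case $c\notin\{\sigma(v),\tau(v)\}$, and the telescoping of diagonal-plus-residual masses into $a_iq_i+b_iq_i'-\min\{q_i,q_i'\}$ reproduce Vigoda's derivation faithfully, so there is nothing to contrast.
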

Since $\sigma$ and $\tau$ might be improper, the previous lemma does not hold if $c\in\{\sigma(v),\tau(v)\}$. The main problem is that $S_\sigma(v,\tau(v))$ does not necessarily contain $S_\tau(w_i,\sigma(v))$ for every $i\in [r]$, and similarly for $S_\tau(v,\sigma(v))$. In this case, we refer to~\cite{V00} for the analysis of the coupling which yields $\nabla_H(\sigma, \tau,c)\leq r-1$, provided that $p_1\leq 1$.

In order to prove Lemma~\ref{lem:vig}, for a given $r$-configuration, we would like to obtain inequalities of the form $\nabla_H(\sigma,\tau,c)\leq r\kappa-1$, for a  constant $\kappa$ as small as possible.  Using the bound in Lemma~\ref{lem:vig2}, we obtain a collection of non-linear inequalities, as they involve $\min$ and $\max$ functions. One can easily set an LP problem $(P)$ such that any of its feasible solutions satisfies these equations, by adding an equation for each possible value of the minimum/maximum. Recall that $p_1=1$ and $p_{\ell+1}\leq p_\ell$ for $\ell\geq 1$.
For $i\geq 1$, let $r_i$ be the size of the configuration corresponding to the $i$-th inequality for some fixed enumeration of the constraints in Lemma~\ref{lem:vig2}. For $\ell \geq 1$, let $\alpha_{i \ell}$ be the coefficient of $p_\ell$ in it. We can describe $(P)$ as follows:
\begin{align*}
\begin{array}{lll@{}ll}
(P):&\text{minimize}  & \kappa &\\[0.2cm]
&\text{subject to}& \displaystyle\sum\limits_{\ell\geq 1}   &\alpha_{i \ell}p_{\ell} \leq r_i \kappa-1 ,  &i\geq 1\\
 &   &                                                &p_{\ell+1}-p_\ell\leq 0 , &\ell \geq  1,\\
 &   &                                                &p_{1}=1 , &\\
  &              &                                     &p_{\ell} \geq 0, & \ell \geq  2.
\end{array}\end{align*}
Consider the following reduced constraint linear program with constraints corresponding to all non-trivial $1$-configurations of size at most $6$ and to the $2$-configurations with $a_1=a_2 \in \{2,3\}$ and $b_1=b_2=1$.

\begin{align*}
\begin{array}{lll@{}ll}
(P_{\text{red}}):&\text{minimize}  & \kappa &\\[0.2cm]
&\text{subject to}&    &i(p_i - p_{i+1}) + (j-1)(p_j - p_{j+1}) \leq \kappa -1 , & 1 \leq i , j \leq 6,\, j\neq 1 \\
 &   &                                                &2(\ell-1)p_\ell  + p_{2\ell+1}+2 \leq 2\kappa -1 & \ell \in \{2,3\},\\
 &   &                                                &p_{\ell+1}-p_\ell\leq 0 , &\ell \geq  1,\\
 &   &                                                &p_{1}=1 , &\\
  &              &                                     &p_{\ell} \geq 0, & \ell \geq  2.
\end{array}\end{align*}
One can check that $\left(\textbf{p}_{\textrm{Vig}},\frac{11}{6}\right)$ forms an optimal solution to $(P_{\text{red}})$. In the discussion below we prove that a larger set of flip parameters, including $\textbf{p}_{\textrm{Vig}}$, correspond to feasible solutions of $(P)$. As $\left(\textbf{p}_{\textrm{Vig}},\frac{11}{6}\right)$  is optimal for $(P_{\text{red}})$ and $(P_{\text{red}})$ is a reduced version of $(P)$, it will be an optimal solution for $(P)$.  

The following technical statement is a compilation of the results from~\cite{V00} and follows from Lemma~\ref{lem:vig2}.
\begin{lemma}[\cite{V00}]~\label{lem:comp}
 Suppose that $(\sigma,\tau)$ has an $r$-configuration $(a_1,\dots,a_r;b_1,\dots,b_r)$ for $c$. If $ip_i \leq 1$ and $(i-1)p_i \leq 2p_3$, then  
\begin{itemize}
\item[(i)] if $r=0$, 
$\nabla_H(\sigma, \tau,c)=  -1;
$
\item[(ii)] if $r=1$, then 
$
\nabla_H(\sigma, \tau,c) \leq  
a_1 (p_{a_1}-p_{a_1+1})+b_1(p_{b_1}-p_{b_1+1})- \min(p_{a_1}-p_{a_1+1},p_{b_1}-p_{b_1+1});
$
\item[(iii)] if $r=2$, then 
$
\nabla_H(\sigma, \tau,c)\leq  2(\ell-1)p_{\ell}+p_{2\ell+1}+2,
$
for $\ell \in \{2,3\}$, moreover the equality only holds if the configuration is either $(\ell,\ell;1,1)$ or $(1,1;\ell,\ell)$;
\item[(iv)] if $r\geq 3$, then 
$
\nabla_H(\sigma, \tau,c)\leq (a-2\,  a_{\max})p_a+(b-2\,  b_{\max})p_b +r(p_1+2p_3). 
$
\end{itemize}
\end{lemma}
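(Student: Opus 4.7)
My plan is to derive the lemma from the generic bound~\eqref{eq:bound nabla_H} of Lemma~\ref{lem:vig2} by specializing to each value of $r$ and simplifying via the two hypotheses $ip_i\le 1$ and $(i-1)p_i\le 2p_3$. Case~(i) is already contained in Lemma~\ref{lem:vig2}. For case~(ii), note that $a_{\max}=a_1$ and $b_{\max}=b_1$, which forces $q_1=p_{a_1}-p_{a_1+1}$, $q_1'=p_{b_1}-p_{b_1+1}$, and makes the coefficients $(a-a_{\max}-1)$, $(b-b_{\max}-1)$ vanish; substituting into~\eqref{eq:bound nabla_H} yields the bound immediately.

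For $r\ge 2$ the key algebraic step I would use is the identity
\[
\sum_{i\in[r]} a_iq_i \;=\; a_{\max}(p_{a_{\max}}-p_a)+\sum_{i\ne i_a} a_ip_{a_i} \;=\; \sum_{i\in[r]} a_ip_{a_i}-a_{\max}\,p_a,
\]
together with its mirror for $\sum_i b_iq_i'$, which collapses~\eqref{eq:bound nabla_H} to the common starting point
\[
\nabla_H(\sigma,\tau,c)\le (a-2a_{\max}-1)p_a+(b-2b_{\max}-1)p_b+\sum_{i\in[r]}\bigl(a_ip_{a_i}+b_ip_{b_i}\bigr)-\sum_{i\in[r]}\min\{q_i,q_i'\}.
\]
For case~(iv), I would bound each summand $a_ip_{a_i}+b_ip_{b_i}-\min\{q_i,q_i'\}$ by $p_1+2p_3$, up to a correction of $p_a$ at $i=i_a$ and $p_b$ at $i=i_b$, using $(a_i-1)p_{a_i}\le 2p_3$ combined with $b_ip_{b_i}\le p_1$ once the $\min$-term is absorbed against one of the two products. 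Summing over $i\in[r]$ and folding the $p_a$, $p_b$ corrections into the $(-1)p_a$, $(-1)p_b$ coefficients then gives $(a-2a_{\max})p_a+(b-2b_{\max})p_b+r(p_1+2p_3)$. A minor case split on whether $i_a=i_b$ is needed to make this accounting work.

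For case~(iii), I would evaluate the common starting point on the extremal configuration $(\ell,\ell;1,1)$ directly. Here $a=2\ell+1$, $a_{\max}=\ell$, $b=3$, $b_{\max}=1$, so the coefficient terms vanish and the two sums contribute $2\ell p_\ell+2$. The two minima are $\min\{q_2,q_2'\}=p_\ell$ and $\min\{q_1,q_1'\}=\min\{p_\ell-p_{2\ell+1},\,1-p_3\}$, and the hypothesis $\ell p_\ell\le 1$ together with monotonicity of $\mathbf{p}$ forces $p_3+p_\ell\le 1+p_{2\ell+1}$, so the second minimum equals $p_\ell-p_{2\ell+1}$ and the total collapses to exactly $2(\ell-1)p_\ell+p_{2\ell+1}+2$. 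The configuration $(1,1;\ell,\ell)$ is handled symmetrically. For every other $2$-configuration I would verify strict inequality by a short case split on $(a_1,a_2;b_1,b_2)$, tracked by the values of $a_{\max}$, $b_{\max}$ and whether $i_a=i_b$.

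The main obstacle is precisely that strictness claim in (iii): showing that no $2$-configuration besides these two saturates the bound requires carefully tracking how the slack in the two $\min$-terms depends on whether $i_a=i_b$ and on the relative orderings of $(a_1,a_2)$ and $(b_1,b_2)$. This rigidity is what singles out the extremal $2$-configurations and underwrites the new metric $d$ introduced in Section~\ref{sec:metric}.
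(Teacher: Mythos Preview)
Your proposal is correct and follows exactly the route the paper indicates: the paper does not give its own proof of this lemma but states that it ``is a compilation of the results from~\cite{V00} and follows from Lemma~\ref{lem:vig2},'' and your derivation specializes~\eqref{eq:bound nabla_H} case by case precisely as that remark suggests. Your algebraic identity collapsing $\sum_i a_iq_i$ to $\sum_i a_ip_{a_i}-a_{\max}p_a$, the per-summand estimate $a_ip_{a_i}+b_ip_{b_i}-\min\{q_i,q_i'\}\le p_1+2p_3$ (with the $p_a,p_b$ corrections at $i_a,i_b$), and the direct evaluation on $(\ell,\ell;1,1)$ are all sound; the residual case analysis for strictness in~(iii) is indeed the tedious part, but it is finite and your outline of tracking $i_a$ versus $i_b$ is the right way to organize it.
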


If $(\mathbf{p},\kappa)$ is a feasible solution of $(P_{\text{red}})$, then, by Lemma~\ref{lem:comp}, the constraints in $(P)$ corresponding to $r$-configurations are satisfied when $r\in\{0,1,2\}$. Next result shows that, under some technical conditions on the flip parameters, the solution $(\mathbf{p},\kappa)$ also satisfies the constraints for $r\geq 3$.

\begin{lemma}\label{lem:trips}
If $ip_i \leq 1$, $(i-1)p_i \leq 2p_3$, $p_1 + 2p_3 = \frac{4}{3}<\kappa$, and $(i-2)p_i < \frac{1}{4} - \frac{3}{2} \left(\frac{11}{6} - \kappa\right)$, 
then for all $r\ge 3$
$$\nabla_H(\sigma, \tau,c) < r\kappa -1\;.$$
\end{lemma}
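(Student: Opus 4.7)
The plan is to start from Lemma~\ref{lem:comp}(iv), which applies under the first two hypotheses ($ip_i \leq 1$ and $(i-1)p_i \leq 2p_3$) and gives, for any $r$-configuration of some color $c \notin \{\sigma(v),\tau(v)\}$ with $r \geq 3$,
$$
\nabla_H(\sigma, \tau, c) \leq (a - 2a_{\max})p_a + (b - 2b_{\max})p_b + r(p_1 + 2p_3).
$$
The case $c \in \{\sigma(v),\tau(v)\}$ is handled separately by the bound $\nabla_H(\sigma,\tau,c) \leq r-1$ mentioned after Lemma~\ref{lem:vig2}, which immediately yields $\nabla_H(\sigma,\tau,c) < r\kappa - 1$ since $\kappa > \frac{4}{3} > 1$. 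So I would restrict attention to the generic case and, using the hypothesis $p_1 + 2p_3 = \frac{4}{3}$, reduce the problem to showing
$$
(a - 2a_{\max})p_a + (b - 2b_{\max})p_b < r\!\left(\kappa - \tfrac{4}{3}\right) - 1.
$$

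The key step is the estimate $(a - 2a_{\max})p_a < \frac{1}{4} - \frac{3}{2}\!\left(\frac{11}{6} - \kappa\right)$, and symmetrically for the $b$-term. To establish it, I would first argue that $a_{\max} \geq 1$: by the multiplicity convention in the definition of a configuration, each Kempe component of $\tau$ meeting $W$ is counted once with size attached to the smallest-index vertex of $W$ inside it, so since $W \neq \emptyset$ at least one $a_i$ is positive. Then I would split into two sub-cases: if $a - 2a_{\max} \leq 0$, the term is non-positive and hence strictly below $\frac{1}{4} - \frac{3}{2}\!\left(\frac{11}{6} - \kappa\right)$ (positive when the hypotheses are consistent, i.e. $\frac{11}{6} - \kappa < \frac{1}{6}$); otherwise $a - 2a_{\max} \leq a - 2$ since $a_{\max} \geq 1$, so $(a - 2a_{\max})p_a \leq (a-2)p_a$ and the fourth hypothesis applied with $i=a$ delivers the desired strict bound.

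Combining the two estimates with the $r(p_1+2p_3)$ term yields
$$
\nabla_H(\sigma, \tau, c) < 2\!\left(\tfrac{1}{4} - \tfrac{3}{2}\!\left(\tfrac{11}{6} - \kappa\right)\right) + \tfrac{4r}{3}.
$$
The proof finishes with a direct calculation: setting $\delta := \frac{11}{6} - \kappa$, one verifies
$$
r\kappa - 1 - \left[\tfrac{1}{2} - 3\delta + \tfrac{4r}{3}\right] = (r-3)\!\left(\tfrac{1}{2} - \delta\right),
$$
which is non-negative for $r \geq 3$ since $\kappa > \frac{4}{3}$ gives $\delta < \frac{1}{2}$. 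Even for $r=3$ this quantity is only zero, so the strict inequality is salvaged from the strict hypothesis $(i-2)p_i < \frac{1}{4} - \frac{3}{2}\delta$.

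The main obstacle I anticipate is the justification of $a_{\max} \geq 1$ (and $b_{\max} \geq 1$) together with the careful handling of $r$-configurations under the multiplicity convention; without this bookkeeping one cannot upgrade $a-2a_{\max}$ to $a-2$. The rest of the argument is purely arithmetic rearrangement, and it is striking that the inequality $(r-3)(\tfrac{1}{2}-\delta) \geq 0$ is tight at $r=3$, explaining precisely why the hypothesis needs to be strict.
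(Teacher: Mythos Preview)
Your proposal is correct and follows essentially the same approach as the paper: both start from Lemma~\ref{lem:comp}(iv), use $a_{\max},b_{\max}\ge 1$ together with the fourth hypothesis to bound $(a-2a_{\max})p_a+(b-2b_{\max})p_b$ by $\tfrac{1}{2}-3(\tfrac{11}{6}-\kappa)$, and then combine with $p_1+2p_3=\tfrac{4}{3}<\kappa$. Your write-up is in fact a bit more careful than the paper's---you spell out why $a_{\max}\ge 1$, treat the case $c\in\{\sigma(v),\tau(v)\}$ explicitly, and unify all $r\ge 3$ via the identity $(r-3)(\tfrac{1}{2}-\delta)$ rather than splitting off $r=3$---but the substance is identical.
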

\begin{proof}
Because $a_{\max},b_{\max}\geq 1$ and $p_1 + 2p_3 = \frac{4}{3}$, we see that for $r=3$
\begin{eqnarray*}
\nabla_H(\sigma, \tau,c) &\leq& (a-2\,  a_{\max})p_a+(b-2\,  b_{\max})p_b +3(p_1+2p_3)\\ 
&<& \frac{1}{2} - 3\left(\frac{11}{6}-\kappa\right) +3(p_1+2p_3)\\
&=& \frac{9}{2} - 3\left(\frac{11}{6}-\kappa\right)= 3\kappa -1\;.
\end{eqnarray*}

Because $p_1+2p_3 = \frac{4}{3} < \kappa$, for all $r \geq 3$,
$$\nabla_H(\sigma, \tau,c) \leq (a-2\,  a_{\max})p_a+(b-2\,  b_{\max})p_b +r(p_1+2p_3) < r\kappa -1\;.$$
\end{proof}

By Lemma~\ref{lem:comp}, we conclude that any feasible solution of $(P_{\text{red}})$ that satisfies the conditions of Lemma~\ref{lem:trips} is a feasible solution of $(P)$; in particular $\left(\textbf{p}_{\text{Vig}},\frac{11}{6}\right)$ is an optimal solution of $(P)$.

Given a solution $(\mathbf{p},\kappa )$ of $(P)$, we say that an $r$-configuration is \emph{$\textbf{p}$-extremal} if $\nabla_H(\sigma,\tau, c)=\kappa r-1$ for flip dynamic with flip parameters $\mathbf{p}$. Since $\mathbf{p}_{\text{Vig}}$ satisfies the hypothesis of Lemma~\ref{lem:trips}, there are no $\mathbf{p}_{\text{Vig}}$-extremal $r$-configurations, for $r\geq 3$. By Lemma~\ref{lem:comp}~$(iii)$, any $\mathbf{p}_{\text{Vig}}$-extremal $2$-configuration is of the form $(\ell,\ell;1,1)$ or $(1,1;\ell,\ell)$, for some $\ell \in \{2,3\}$. A simple computation shows that, up to symmetries, there are six $\mathbf{p}_{\text{Vig}}$-extremal configurations: $(2;1)$, $(3;1)$, $(4;1)$, $(5;1)$, $(2,2;1,1)$, and $(3,3;1,1)$.

In order to simplify the analysis, we would like to find an optimal solution $\left(\mathbf{p},\frac{11}{6}\right)$ of $(P)$ that minimizes the number of $\mathbf{p}$-extremal configurations. There are two crucial constraints in $(P)$ which correspond to the extremal configurations defined in Section~\ref{sec:metric}. The extremal $1$-configurations $(2;1)$ and $(1;2)$ lead to the inequality $p_1-p_3\leq \kappa-1$ and the extremal $2$-configurations $(3,3;1,1)$ and $(1,1;3,3)$ lead to $4p_3+p_7\leq 2\kappa-3$. As $p_1=1$ and $p_7\geq 0$, these two inequalities already imply that $\kappa \ge \frac{11}{6}$. Moreover, if $\kappa = \frac{11}{6}$, then $p_3=\frac{1}{6}$ and $p_7=0$.
As we will show, there exist optimal solutions $\left(\mathbf{p},\frac{11}{6}\right)$ of $(P)$ with only two $\mathbf{p}$-extremal configurations, up to symmetries, corresponding to $(2;1)$ and $(3,3;1,1)$.

This motivates the introduction of the LP problems $(P^*)$ and $(P^*_{\text{red}})$ with the same variables, optimization function, and constraints as $(P)$ and $(P_{\text{red}})$, respectively, apart from the constraints given by the extremal $1$-configurations $(2;1)$ and $(1;2)$ and by the extremal $2$-configurations $(3,3;1,1)$ and $(1,1;3,3)$, which are replaced by $p_3=\frac{1}{6}$ and $p_7=0$. Hence, $p_{\ell}=0$ for all $\ell\geq 7$ and the set of variables is now finite. Clearly, if $(\mathbf{p^*},\kappa^*)$ is an optimal solution for $(P^*)$, then $\left(\mathbf{p^*},\frac{11}{6}\right)$ is an optimal solution for $(P)$, and similarly for the reduced version.

\begin{observation}\label{obs}
The program $(P^*_{\textrm{red}})$ has optimal solution $\mathbf{p^*}=(p^*_1,\dots,p^*_6,\kappa^*)$ with
$$
p^*_1 = 1,\, p^*_2 = \frac{185}{616},\, p^*_3=\frac{1}{6},\, p^*_4=\frac{47}{462},\, p^*_5=\frac{9}{154},\, p^*_6 =\frac{2}{77} \text{ and }\kappa^*=\frac{161}{88}\;.
$$
\end{observation}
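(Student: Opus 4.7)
The plan is to proceed in two steps: first verify feasibility of the proposed solution by direct substitution, then certify optimality by exhibiting non-negative multipliers for a set of tight constraints that together force the matching lower bound $\kappa \geq \frac{161}{88}$.

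The equality $p_7 = 0$ combined with the monotonicity and non-negativity constraints forces $p_\ell = 0$ for every $\ell \geq 7$, so $(P^*_{\text{red}})$ reduces to a finite linear program in the variables $p_2, p_4, p_5, p_6, \kappa$. For feasibility, I substitute the claimed values and verify monotonicity and non-negativity (immediate from the strictly decreasing sequence $1 > \frac{185}{616} > \frac{1}{6} > \frac{47}{462} > \frac{9}{154} > \frac{2}{77} > 0$), then check each remaining type-A inequality $i(p_i - p_{i+1}) + (j-1)(p_j - p_{j+1}) \leq \kappa^* - 1 = \frac{73}{88}$ for $1 \leq i, j \leq 6$, $j \neq 1$, with the pair $(i,j)$ corresponding to the extremal 1-configurations excluded, together with the single remaining type-B inequality $2 p_2 + p_5 + 2 \leq 2\kappa^* - 1 = \frac{117}{44}$ for $\ell = 2$. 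Because $i(p_i - p_{i+1})$ is largest at $i = 1$ and the differences $p_\ell - p_{\ell+1}$ are decreasing in $\ell$, feasibility reduces to checking a small finite table of values.

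For optimality, the constraints tight at the proposed solution turn out to be precisely the type-A constraints for $(i,j) \in \{(1,3),(1,4),(1,5),(1,6)\}$ together with the type-B constraint for $\ell = 2$. I then form the non-negative combination of these five inequalities with multipliers $21, 14, 5, 4, 22$ respectively. A direct check shows that these multipliers are exactly what is needed to cancel the coefficients of the free variables $p_2, p_4, p_5, p_6$ in the sum; what remains is
\[
44\, p_1 + 42\, p_3 - 20\, p_7 + 110 \;\leq\; 88\,\kappa.
\]
Substituting the equality constraints $p_1 = 1$, $p_3 = \frac{1}{6}$, $p_7 = 0$ yields $161 \leq 88\,\kappa$, i.e.\ $\kappa \geq \frac{161}{88}$. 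Since this lower bound is attained by $\mathbf{p^*}$, the solution is optimal.

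The main obstacle is identifying the correct set of tight constraints: the LP has many type-A inequalities and it is not a priori clear which five will be simultaneously tight at the optimum. Once the tight set is correctly guessed (possibly with computer assistance), the dual multipliers are forced by the cancellation conditions on the free variables $p_2, p_4, p_5, p_6$, and the rest of the proof reduces to a short arithmetic verification.
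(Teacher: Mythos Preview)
Your proposal is correct. The paper itself does not supply a proof of this observation: it records the values of $\mathbf{p^*}$ and $\kappa^*$ as the output of the LP $(P^*_{\text{red}})$ and later remarks that the remaining constraints can be ``verified using a computer.'' In contrast, you give an explicit dual certificate. The five multipliers $21,14,5,4,22$ attached to the tight constraints $(i,j)\in\{(1,3),(1,4),(1,5),(1,6)\}$ and the $\ell=2$ type-B inequality do exactly what you claim: the coefficients of the free variables $p_2,p_4,p_5,p_6$ vanish, and what is left is $44p_1+42p_3-20p_7+110\le 88\kappa$, which under $p_1=1$, $p_3=\tfrac{1}{6}$, $p_7=0$ gives $\kappa\ge \tfrac{161}{88}$. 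Your identification of the tight set is correct (each of the five left-hand sides evaluates to $\tfrac{73}{88}$ or $\tfrac{117}{44}$ exactly), and all multipliers are non-negative, so the LP duality argument is valid.

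The advantage of your route is that optimality is certified by a short arithmetic check rather than an appeal to an LP solver; this is strictly more informative than what the paper records. One small caution: your feasibility shortcut (``$i(p_i-p_{i+1})$ is largest at $i=1$ and the differences are decreasing'') does not by itself dispose of all $(i,j)$ pairs, since the pair $(1,2)$ is removed but $(i,2)$ for $i\ge 2$ is not, and $(j-1)(p_j-p_{j+1})$ at $j=2$ slightly exceeds $10/77$. This is harmless because for $i\ge 2$ the first term drops enough, but you should either state that the finite table has been checked in full or tighten the monotonicity argument to cover the $j=2$ column separately.
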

The values of $\mathbf{p^*}$ given in this observation are the values of the flip parameters we will use in the proof of Theorem~\ref{thm:improvement}.  Note that these values satisfy $ip_i \leq 1$, $(i-1)p_i \leq 2p_3$, $p_1 + 2p_3 = \frac{4}{3}<\kappa$ and $(i-2)p_i <  \frac{1}{4} - \frac{3}{2}\left(\frac{11}{6} - \kappa^*\right)=\frac{1}{4} - \frac{3}{2}\left(\frac{11}{6} - \frac{161}{88}\right) = \frac{43}{176}$.  Hence, by Lemma~\ref{lem:trips} with $\kappa = \kappa^*$, we find that, for all $r \ge 3$, $\nabla_H(\sigma, \tau,c) <  r\kappa^* -1$. One can verify using a computer that $\nabla_H(\sigma, \tau,c) \leq r\kappa^* -1$ for all $r$-configurations with $r\in\{0,1,2\}$ other than $(2;1)$, $(1;2)$, $(3,3;1,1)$, and $(1,1;3,3)$.  So $(p^*,\kappa^*)$ is an optimal solution of $(P^*)$ and of $(P)$, and  up to symmetries, there are only two  $\mathbf{p^*}$-extremal configurations, $(2;1)$ and $(3,3;1,1)$.

For flip dynamics with flip parameters $\mathbf{p^*}$, it follows that 
if $c\in B_{\sigma,\tau}(v)$, then
$$
\nabla_H(\sigma, \tau,c) \leq \frac{11}{6}\;,
$$
and that if $c\notin B_{\sigma,\tau}(v)$, then
$$
\nabla_H(\sigma, \tau,c) \leq \kappa^*=\frac{161}{88}\;.
$$

Our next lemma follows directly from these two equations and~\eqref{eq:Hamm}.
\begin{lemma}\label{lem:improvement}
Let $\varepsilon=\frac{11}{6}-\frac{161}{88}$. For every $\sigma \tau\in E(\Gamma)$, we have
$$
\nabla_H(\sigma, \tau) \leq \left(\frac{11}{6}-\varepsilon\left(1-\beta_{\sigma,\tau}\left(v\right)\right)\right) \Delta-k\;.
$$
\end{lemma}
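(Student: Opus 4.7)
The plan is to combine the two per-color bounds displayed immediately above the lemma with the decomposition~\eqref{eq:Hamm}, namely $\nabla_H(\sigma,\tau)=\sum_{c\in[k]} \nabla_H(\sigma,\tau,c)$, while carefully tracking the multiplicities with which each neighbor of $v$ contributes. The key identity to be exploited is the definition $|B^1_{\sigma,\tau}(v)| + 2|B^2_{\sigma,\tau}(v)| = \beta_{\sigma,\tau}(v)\cdot \Delta$, which correctly weights extremal $2$-configurations by the two neighbors they occupy.

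For each color $c\in[k]$, set $r_c := |N(v)\cap \sigma^{-1}(c)|$; since $\sigma$ and $\tau$ agree on $V(G)\setminus\{v\}$, this equals $|N(v)\cap \tau^{-1}(c)|$ as well. I would partition $[k]$ into three classes and bound each separately. (a) If $r_c=0$, then Lemma~\ref{lem:comp}(i) gives $\nabla_H(\sigma,\tau,c)=-1$. (b) If $c\in B_{\sigma,\tau}(v)$, then the $r_c$-configuration at $c$ is one of the two $\mathbf{p^*}$-extremal configurations (up to symmetry), so the corresponding LP constraint of $(P)$ is tight at $\kappa=\tfrac{11}{6}$ for the values $p^*_3=1/6$, $p^*_7=0$, yielding $\nabla_H(\sigma,\tau,c)\le r_c\cdot \tfrac{11}{6}-1$. (c) If $c\notin B_{\sigma,\tau}(v)$ and $r_c\ge 1$, then the constraint belongs to $(P^*)$, and since $(\mathbf{p^*},\kappa^*)$ is feasible for $(P^*)$ we obtain $\nabla_H(\sigma,\tau,c)\le r_c\kappa^*-1$.

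Summing over $c\in[k]$ and aggregating the $-1$ contributions to $-k$ gives
$$
\nabla_H(\sigma,\tau)\;\le\;\tfrac{11}{6}\sum_{c\in B}r_c+\kappa^*\!\!\sum_{c\notin B,\,r_c\ge 1}\!\!r_c-k.
$$
Since every $c\in B^1_{\sigma,\tau}(v)$ has $r_c=1$ and every $c\in B^2_{\sigma,\tau}(v)$ has $r_c=2$, by definition of $\beta_{\sigma,\tau}(v)$ we have $\sum_{c\in B}r_c=|B^1|+2|B^2|=\beta_{\sigma,\tau}(v)\Delta$, while $\sum_c r_c=\deg(v)\le \Delta$. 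Rewriting the right-hand side as $\kappa^*\deg(v)+(\tfrac{11}{6}-\kappa^*)\beta_{\sigma,\tau}(v)\Delta-k$, using $\deg(v)\le \Delta$ together with $\tfrac{11}{6}-\kappa^*=\varepsilon>0$, then produces exactly $\bigl(\tfrac{11}{6}-\varepsilon(1-\beta_{\sigma,\tau}(v))\bigr)\Delta-k$, as claimed.

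There is no genuine technical obstacle: the substantive work (setting up $(P)$ and $(P^*)$, identifying $\mathbf{p^*}$, and classifying the two $\mathbf{p^*}$-extremal configurations) has already been completed in the preceding discussion. The only subtlety is that the per-color bound on $\nabla_H(\sigma,\tau,c)$ is linear in $r_c$, so colors in $B^2_{\sigma,\tau}(v)$ must contribute with weight $2$ to the summation---this is precisely why $\beta_{\sigma,\tau}(v)$ is defined with the weighting $|B^1|+2|B^2|$ rather than $|B^1|+|B^2|$.
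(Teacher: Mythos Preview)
Your proposal is correct and takes essentially the same approach as the paper, which simply states that the lemma ``follows directly from these two equations and~\eqref{eq:Hamm}.'' Your write-up is in fact more careful than the paper's terse account: you state the per-color bounds in the form $\nabla_H(\sigma,\tau,c)\le r_c\kappa-1$ (the paper's displayed inequalities omit the $r_c$ and the $-1$), and you make explicit the identity $\sum_{c\in B}r_c=\beta_{\sigma,\tau}(v)\Delta$ that drives the final algebra.
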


\subsection{Contribution of $\nabla_B$}\label{sec:nablaB}

In this section we bound $\nabla_B(\sigma,\tau)$ from above. Recall the coupling defined in Section~\ref{sec:coup}. Similarly as before, we define the contributions
\begin{align*}
\nabla_B(\sigma, \tau, c) &:=- kn\sum_{S\in\cA_\sigma(c)\cup\{\emptyset\}\atop S'\in \cA_\tau(c)\cup\{\emptyset\}} \bP_{\sigma \tau}(S,S') (d_B(\sigma_S, \tau_{S'})-d_B(\sigma, \tau)) \;,\\
\overline{\nabla_B}(\sigma, \tau) &:=- kn\sum_{S\in\overline{\cA_\varphi}} \bP_{\sigma \tau}(S,S) (d_B(\sigma_S, \tau_{S})-d_B(\sigma, \tau)) \;.
\end{align*}
By equations (\ref{eq:omega}) and (\ref{eq:d_B}), since $d_H(\sigma,\tau)=1$, we have $d_B(\sigma,\tau)= 1-\omega(\sigma,\tau)=\gamma (1-\beta_{\sigma,\tau}(v))$. Moreover, $d_B(\sigma_S,\tau_{S'})\geq 0$, for every $S\in \cK_\sigma$ and $S'\in \cK_\tau$. By the properties of the coupling,
\begin{align*}
\nabla_B(\sigma, \tau, c) &\leq -\gamma(1-\beta_{\sigma,\tau}(v)) kn  \sum_{S\in\cA_\sigma(c)\cup\{\emptyset\}\atop S'\in \cA_\tau(c)\cup\{\emptyset\}} \bP_{\sigma \tau}(S,S')  \\
 &\leq -\gamma(1-\beta_{\sigma,\tau}(v)) kn \left(\sum_{S\in\cA_\sigma(c)} \bP_{\sigma}(S) +\sum_{S'\in\cA_\tau(c)} \bP_{\sigma}(S')\right) \\
  &= -2\gamma(1-\beta_{\sigma,\tau}(v)) (|N(v)\cap \varphi^{-1}(c)|+1) \;,
\end{align*}
where we have used that $\bP_\sigma (S)\leq 1/kn$.

We can bound the expected change of $\nabla_B$ as follows
\begin{align}\label{eq:bound nabla_B}
\nabla_B(\sigma, \tau) &= \overline{\nabla_B}(\sigma, \tau) + \sum_{c\in [k]} \nabla_B(\sigma,\tau,c)\leq \overline{\nabla_B}(\sigma, \tau)+2\gamma (k+\Delta) (1-\beta_{\sigma,\tau}(v))  \;.
\end{align}

An important difference here as opposed to the analysis of the contribution of $\nabla_H$, is that the components in $\overline{\cA_\varphi}$ have an effect on the expected change of $\nabla_B$. For $S\in \overline{\cA_\varphi}$, since $d_H(\sigma_S,\tau_S)=1$, we have $d_B(\sigma_S,\varphi_S)= \gamma(1-\beta_{\sigma_S,\tau_S}(v))$.
It follows that,
\begin{align*}
\overline{\nabla_B}(\sigma, \tau) &=\gamma \sum_{S\in\overline{\cA_\varphi}} p_{|S|} (\beta_{\sigma_S, \tau_S}(v)-\beta_{\sigma, \tau}(v)) \;.
\end{align*}
For each $c\in [k]$ and $i\in\{1,2\}$ and $S\in\overline{\cA_\varphi}$, let 
$$
\xi_{\sigma,\tau}(v,c,S) := 
\begin{cases}
-i & \text{if }c\in B^i_{\sigma,\tau}(v) \text{ and } c\notin B_{\sigma_S,\tau_S}(v),\\
i & \text{if }c\notin B_{\sigma,\tau}(v)  \text{ and }c\in B^i_{\sigma_S,\tau_S}(v),\\
-1 & \text{if }c\in B^2_{\sigma,\tau}(v) \text{ and } c\in B^1_{\sigma_S,\tau_S}(v),\\
1 & \text{if }c\in B^1_{\sigma,\tau}(v)  \text{ and }c\in B^2_{\sigma_S,\tau_S}(v), \\
0 & \text{otherwise}.
\end{cases}
$$
The variable $\xi_{\sigma,\tau}(v,c,S)$ can be understood as the contribution of color $c$ to $\beta_{\sigma_S, \tau_S}(v)-\beta_{\sigma, \tau}(v)$.
For every $\cS\subseteq \overline{\cA_\varphi}$, we define
$$
\overline{\nabla_B}(\sigma, \tau, c, \cS) = \frac{\gamma}{\Delta}  \sum_{S\in \cS} p_{|S|} \xi_{\sigma,\tau}(v,c,S)\;,
$$
and note that 
$$
\overline{\nabla_B}(\sigma, \tau) = \sum_{c\in [k]} \overline{\nabla_B}(\sigma, \tau, c, \overline{\cA_\varphi})  \;.
$$

Next lemma bounds from above the contribution of each $\overline{\nabla_B}(\sigma, \tau, c, \overline{\cA_\varphi})$.
\begin{lemma}\label{lem:complement}
For $i\in\{1,2\}$, if $c\in B^i_{\sigma,\tau}(v)$, then
$$
\overline{\nabla_B}(\sigma, \tau, c,\overline{\cA_\varphi}) \leq -i \gamma\left( \frac{k}{\Delta}-\frac{3}{2}\right)\;.
$$
If $c\notin B_{\sigma,\tau}(v)$, then
$$
\overline{\nabla_B}(\sigma, \tau, c,\overline{\cA_\varphi}) \leq  2\gamma\left(9+\frac{15k}{\Delta}\right)\;.
$$
\end{lemma}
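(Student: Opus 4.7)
The key structural observation is that $\xi_{\sigma,\tau}(v, c, S) \neq 0$ only if $S$ intersects a small \emph{relevant set} of vertices determined by $v$, its neighbors colored $c$, and the small Kempe components participating in the (potential) extremal configuration for $c$. Since extremal configurations involve Kempe components of size at most $3$ and $\mathbf{p}$ is bounded, this relevant set has size $O(\Delta)$, and the bound on $\overline{\nabla_B}(\sigma, \tau, c, \overline{\cA_\varphi})$ follows from counting positive and negative contributions to the sum carefully.

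For \textbf{Case 1}, I treat $i = 1$ first; WLOG $(\sigma,\tau)$ has extremal $1$-configuration $(2;1)$ for $c$. Let $w_1$ be the unique $c$-colored neighbor of $v$ and let $x \neq w_1$ be the second vertex of $S_\tau(w_1, \sigma(v)) = \{w_1, x\}$, so $\varphi(x) = \sigma(v)$. Negative contributions come from the following large family of destroying flips: for each of the at least $k - \Delta - O(1)$ colors $d \in [k]\setminus(\{c,\sigma(v),\tau(v)\}\cup\varphi(N(x)))$, the singleton $\{x\}$ is a Kempe component of $\varphi$ with colors $(\sigma(v), d)$ lying in $\overline{\cA_\varphi}$, and flipping it recolors $x$ to $d$, shrinking $S_{\tau_S}(w_1, \sigma(v))$ to $\{w_1\}$. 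A direct check shows the resulting configuration becomes $(1;1)$, which is not extremal, so $\xi = -1$ and each such flip contributes $p_1\cdot (-1) = -1$ to the sum. Positive contributions come from flips that accidentally create a new extremal configuration (e.g., adding a second $c$-colored neighbor of $v$ to form $(3,3;1,1)$, or increasing the Kempe sizes via a carefully placed flip); a straightforward enumeration using $|N(v)| \leq \Delta$ and the bounded support of $\mathbf{p}$ shows there are only $O(\Delta)$ of them. Dividing by $\Delta$ yields the bound $-\gamma(k/\Delta - 3/2)$. The case $i = 2$ (configuration $(3,3;1,1)$) is analogous, but now each destructive flip contributes $-2$ to $\xi$, producing the factor of $i$ in the statement.

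For \textbf{Case 2}, all nonzero $\xi$-contributions are positive. I would enumerate $S \in \overline{\cA_\varphi}$ whose flip makes $c$ extremal in $(\sigma_S, \tau_S)$: such $S$ must either modify $N(v) \cap \varphi^{-1}(c)$ so it has one or two $c$-colored neighbors with Kempe-component sizes matching an extremal template, or adjust the size of some $S_\varphi(w_j, \pi(v))$ to match such a template. In each sub-case the flip is specified by a vertex in a bounded neighborhood of $v$ together with a color, yielding at most $O(\Delta + k)$ components, each contributing at most $p_1\cdot 2$ to the sum. Carefully tracking the constants (with the factor $\xi \leq 2$ for $2$-extremal outcomes) produces the stated bound $2\gamma(9 + 15k/\Delta)$.

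\textbf{Main obstacle.} The chief difficulty is bookkeeping: avoiding double-counting when Kempe components overlap (e.g., when $x \in N(v)$, or when distinct $w_j$ share Kempe structure), handling improper colorings where the local picture can degenerate (so that, for instance, $\cA_\sigma(\pi(v)) \cap \cA_\pi(\varphi(v)) \neq \emptyset$), and ensuring the positive contributions in Case 1 do not overwhelm the $k$-sized main term. The explicit constants $-3/2$ and $(9,15)$ provide precisely the slack needed for these edge cases; establishing them requires a case analysis over a finite list of local configurations but no argument deeper than the localization observation above.
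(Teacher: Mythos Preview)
Your approach is essentially the paper's: for $c\in B^i_{\sigma,\tau}(v)$ exhibit many singleton flips that destroy the extremal configuration and bound the few flips that could upgrade it; for $c\notin B_{\sigma,\tau}(v)$ enumerate the $O(k+\Delta)$ flips that could create one. The paper organizes Case~2 via the partition $\cT=\cT_1\cup\cT_2\cup\cT_3^+\cup\cT_3^-$ (according to whether $W$ shrinks, grows, or the component sizes change), obtaining the explicit counts $3k,\ \Delta,\ 12k,\ 8\Delta$ that sum to the stated constants, but your outline would arrive at the same thing.

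There is one point in Case~1 where your sketch is too loose to close. You write that there are ``only $O(\Delta)$'' positive-contributing flips and that ``dividing by $\Delta$ yields the bound $-\gamma(k/\Delta-3/2)$''. But if each of those $O(\Delta)$ flips were weighted by $p_1=1$, the positive term would be $\Theta(\gamma)$, and combined with the negative term $-\gamma(k-\Delta-2)/\Delta$ you would only get $-\gamma(k/\Delta - C)$ for some $C\ge 2$; since $k/\Delta<2$ in the regime of interest, this is \emph{positive} and the lemma fails. The paper's proof avoids this by observing that any flip taking an extremal $1$-configuration to an extremal $2$-configuration must change the colour of at least two vertices (one to enlarge $W$ from size $1$ to $2$, and another because the existing component size $a_1=2$ must become $1$ or $3$), hence $|S|\ge 2$ and $p_{|S|}\le p_2\le \tfrac{1}{3}$. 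This factor of $\tfrac{1}{3}$ is exactly what brings the constant down from $2$ to $\tfrac{4}{3}+o(1)\le \tfrac{3}{2}$. Your phrase ``the bounded support of $\mathbf{p}$'' does not capture this; you need the lower bound on $|S|$, not the upper bound.
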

\begin{proof}

Recall that the extremal configurations are $(2;1)$, $(1;2)$, $(3,3;1,1)$ and $(1,1;3,3)$ and that $W=N(v)\cap \varphi^{-1}(c)=\{w_1,\dots,w_r\}$.

Assume first that $c\in B^i_{\sigma,\tau}(v)$ for some $i\in \{1,2\}$. Consider the sets of components
\begin{align*}
\cS_0&:= \{S\in \overline{\cA_\varphi}: c\notin B_{\sigma_S,\tau_S}(v)\}\;,\\
\cS_2&:= \{S\in \overline{\cA_\varphi}: c\in B^2_{\sigma_S,\tau_S}(v)\}\;.
\end{align*}
Note that when $i=1$, then for every $S\in \overline{\cA_\varphi}\setminus  (\cS_0\cup \cS_2)$ we have $\xi_{\sigma,\tau}(v,c,S)\leq 0$; therefore,
$$
\overline{\nabla_B}(\sigma, \tau, c,\overline{\cA_\varphi}) \leq 
\overline{\nabla_B}(\sigma, \tau, c,\cS_0)+ \overline{\nabla_B}(\sigma, \tau, c,\cS_2)\;.
$$
Note that when $i=2$, then for every $S\in \overline{\cA_\varphi} \setminus \cS_0$ we have $\xi_{\sigma,\tau}(v,c,S)\leq 0$; therefore,
$$
\overline{\nabla_B}(\sigma, \tau, c,\overline{\cA_\varphi}) \leq 
\overline{\nabla_B}(\sigma, \tau, c,\cS_0) \;.
$$

We proceed to bound $\overline{\nabla_B}(\sigma, \tau, c,\cS_0)$ for $i\in\{1,2\}$.
Without loss of generality, assume that $a_1>b_1$. Let $u\in S_\tau(w_1,\sigma(v))$ with $\tau(u)=\sigma(v)$; we note that  $u\notin W\cup\{v\}$ and that such a vertex always exists as $a_1\geq 2$.
Choose a color $c'\in [k]$ with $c'\notin \varphi(N(u))\cup\{\sigma(v),\tau(v)\}$. Let $S=S_\varphi(u,c') \in \overline{A_\varphi}$. As $S=\{u\}$, $(\sigma_{S},\tau_S) $ has either a $(1;1)$ or a $(j,3;1,1)$ (with $j\in\{1,2\}$) configuration for $c$, i.e.  $c\notin B_{\sigma_S,\tau_S}(v)$. As there are at least $k-\Delta-2$ choices for $c'$ and as $p_{|S|}=p_1=1$, we have
$$
\overline{\nabla_B}(\sigma, \tau, c,\cS_0)\leq - \frac{\gamma(k-\Delta-2)}{\Delta
}\cdot i\;.
$$

Now we bound $\overline{\nabla_B}(\sigma, \tau, c,\cS_2)$, provided that $i=1$. Let $S\in \cS_2$, then $|S\cap (N(v)\setminus\{w_1\})|\geq 1$ and if $w\in S\cap (N(v)\setminus \{w_1\})$, then $\varphi_S(w)=c$. Thus, $S$  can be described as $S=S_\varphi(w,c)$ for $w\in N(v)$, implying that $|\cS_2|\leq \Delta$.  Moreover, $|S|\geq 2$ as at least two vertices need to change their color to transform an extremal $1$-configuration into an extremal $2$-configuration. Since $p_{|S|}\leq p_2\leq \frac{1}{3}$ and $\xi_{\sigma,\tau}(v,c,S)=1$, we have
$$
\overline{\nabla_B}(\sigma, \tau, c,\cS_2)\leq \frac{\gamma}{3}\;.
$$

From the bounds on $\overline{\nabla_B}(\sigma, \tau, c,\cS_0)$ and $\overline{\nabla_B}(\sigma, \tau, c,\cS_2)$ derived above, we obtain that for $i\in\{1,2\}$ and $c\in B^i_{\sigma,\tau}(v)$
$$
\overline{\nabla_B}(\sigma, \tau, c,\overline{\cA_\varphi})\leq- \frac{\gamma \left(k- \frac{4\Delta}{3}-2\right)}{\Delta} \cdot i \leq -i \gamma\left( \frac{k}{\Delta}-\frac{3}{2}\right) \;,
$$
and this proves the first statement.

\medskip

To prove the second statement, assume that $c\notin B_{\sigma,\tau}(v)$ and let $\cT:=\{S\in \overline{\cA_\varphi}: c\in B_{\sigma_S,\tau_S}(v)\}$. 
Again, for every $S\in \overline{\cA_\varphi} \setminus \cT$, we have $\xi_{\sigma,\tau}(v,c,S)\leq 0$. Therefore,
$$
\overline{\nabla_B}(\sigma, \tau, c,\overline{\cA_\varphi}) \leq 
\overline{\nabla_B}(\sigma, \tau, c,\cT)\;.
$$
Define $W_S=N(v)\cap \varphi_S^{-1}(c)$ with $|W_S|=r_S$ and note that $r_S\leq 2$.
Consider the partition $\cT=\cT_1\cup\cT_2\cup \cT_3$ with
\begin{align*}
\cT_1&:= \{S\in \overline{\cA_\varphi}: W\setminus W_S\neq \emptyset\}\;,\\
\cT_2&:= \{S\in \overline{\cA_\varphi}: W_S\setminus W\neq \emptyset\}\setminus \cT_1\;,\\
\cT_3&:= \{S\in \overline{\cA_\varphi}: W_S= W\}\;.
\end{align*}
For every $S\in \cT_3$, if $c\in B^1_{\sigma_S,\tau_S}(v)$, let $((a_S)_1;(b_S)_1)$ be the extremal $1$-configuration for $c$ in $(\sigma_S,\tau_S)$ and if $c\in B^2_{\sigma_S,\tau_S}(v)$, let $((a_S)_1,(a_S)_{2};(b_S)_1,(b_S)_{2})$ be the extremal $2$-configuration for $c$ in $(\sigma_S,\tau_S)$.
Recall that $(a_1,\dots,a_r;b_1,\dots,b_r)$ is the $r$-configuration for $c$ in $(\sigma,\tau)$. As it is non-extremal, there exists $x\in\{a,b\}$ and $j\in [r_S]$, such that $x_j\neq (x_S)_j$. Note that $(x_S)_j\leq 3$. 

Consider the partition $\cT_3=\cT_3^+ \cup\cT_3^- $ with
\begin{align*}
\cT_3^+&:= \{S\in \cT_3: x_j> (x_S)_j\}\;,\\
\cT_3^-&:= \{S\in \cT_3: x_j< (x_S)_j\}\;.
\end{align*}
To bound the size of $\cS\in \{\cT_1,\cT_3^+\}$ we will proceed as follows. For every $S\in \cS$, there is a vertex in a Kempe component of either $\sigma$ or $\tau$ that does not belong to the corresponding component in either $\sigma_S$ or $\tau_S$.  If there exists $R(\cS)\subseteq S_\sigma(v,c)\cup S_\tau(v,c)$ such that $S\cap R(\cS)\neq \emptyset$ for every $S\in \cS$, then, any $S\in \cS$ can be described as $S=S_\varphi(u,c')$ for $u\in R(\cS)$ and $c'\in [k]$, and $|\cS|\leq |R(\cS)| k$.

If $\cS= \cT_1$ and $S\in \cS$, then observe that $|S\cap W|= |W\setminus W_S|\geq \max\{r-r_S,1\}$. Let $m=\min\{r_S+1, r\} $. If $R(\cT_1)=R_1=\{w_1,\dots, w_{m}\}$, it follows that $|S\cap R_1|\geq |S\cap W|- (r-(r_S+1)) \geq 1$ and $|\cT_1|\leq  (r_S+1) k\leq 3 k$. 

If $\cS= \cT_3^+$ and $S\in \cS$, recall that $x_j>(x_S)_j$ and set $\varphi=\sigma$ if $x=b$ and  $\varphi=\tau$ if $x=a$, and let $\pi\in\{\sigma,\tau\}\setminus  \{\varphi\}$.
Let $R(\cT_3^+)=R_3$ be an arbitrary set of $(x_S)_j$ vertices in  $S_{\varphi}(w_j,\pi(v)) \setminus \left\{w_j\right\}$. As $w_j\notin R_3$, we have $S\cap R_3\neq \emptyset$.  
Since there are $4$ choices for the extremal configuration, we have $|\cT_3^+|\leq 4 (x_S)_j k\leq  12 k$.
\medskip

To bound the size of $\cS\in \{\cT_2,\cT_3^-\}$ we will proceed as follows. For every $S\in \cS$, there is a vertex in the neighborhood of a Kempe component of either $\sigma$ or $\tau$, that belongs to the corresponding component in either $\sigma_S$ or $\tau_S$.  If there exists a set $N(\cS)$ of neighbors of $S_\varphi(v,c)$ such that $S\cap N(\cS)\neq \emptyset$ for every $S\in \cS$, then, any $S\in \cS$ can be described as $S=S_\varphi(u,c')$ for $u\in N(\cS)$ and a unique $c'\in\{c,\pi(v)\}$, and $|\cS|\leq |N(\cS)|$.

If $\cS= \cT_2$ and $S\in \cS$, then let $N(\cT_2)=N_2=N(v)\setminus W$. Clearly $S\cap N_2 \neq \emptyset$ and 
$|\cT_2|\leq \Delta$.

If $\cS= \cT_3^-$ and $S\in \cS$,  recall that $x_j<(x_S)_j$ and set $\varphi=\sigma$ if $x=b$ and  $\varphi=\tau$ if $x=a$, and let $\pi\in\{\sigma,\tau\}\setminus  \{\varphi\}$.
Let $N(\cT_3^-)=N_3$ be the set of neighbors of $S_\varphi (w_j,\pi(v)) $, which satisfies $S\cap N_3\neq\emptyset$. 
As $S\in \cT_3^-$, $|S|\leq x_j\Delta\leq ((x_S)_j-1)\Delta \leq 2\Delta$.  Since there are $4$ choices for the extremal configuration, we have $|\cT_3^-|\leq 8\Delta$.
\medskip

Since $p_{|S|}\leq 1$ and $\xi_{\sigma,\tau}(v,c,S)\leq 2$, we conclude the second statement of the lemma,
$$
\overline{\nabla_B}(\sigma, \tau, c,\overline{\cA_\varphi})\leq \overline{\nabla_B}(\sigma, \tau, c,\cT) \leq \frac{2\gamma}{\Delta}\left(3k+12k+\Delta+8 \Delta\right) = 2\gamma\left(9+\frac{15k}{\Delta}\right)\;.
$$
\end{proof}

The following bound on $\nabla_B$ follows directly from~\eqref{eq:bound nabla_B} and Lemma~\ref{lem:complement}.
\begin{lemma}\label{lem:improvement2}
For every $\sigma\tau\in E(\Gamma)$, we have
$$ 
\nabla_B(\sigma, \tau) \leq - \gamma\left(\frac{k}{\Delta}-\frac{3}{2}\right)\beta_{\sigma,\tau}(v)+2\gamma\left(10+\frac{16k}{\Delta}\right)(1-\beta_{\sigma,\tau}(v))\;.
$$
\end{lemma}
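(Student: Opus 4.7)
The plan is to derive Lemma~\ref{lem:improvement2} as a direct algebraic consequence of the bound~\eqref{eq:bound nabla_B} and the per-color estimates supplied by Lemma~\ref{lem:complement}. No further combinatorial insight beyond those two results is needed; the argument is a careful summation and collection of constants.

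First I would write $\overline{\nabla_B}(\sigma,\tau) = \sum_{c\in[k]} \overline{\nabla_B}(\sigma,\tau,c,\overline{\cA_\varphi})$ and split the sum according to which of the three mutually exclusive cases a color $c$ belongs to: $c\in B^1_{\sigma,\tau}(v)$, $c\in B^2_{\sigma,\tau}(v)$, or $c\notin B_{\sigma,\tau}(v)$. Invoking Lemma~\ref{lem:complement} on each class gives per-color bounds of $-\gamma(k/\Delta-3/2)$, $-2\gamma(k/\Delta-3/2)$, and $2\gamma(9+15k/\Delta)$, respectively. Summing these yields
\[
\overline{\nabla_B}(\sigma,\tau) \leq -\gamma\left(\frac{k}{\Delta}-\frac{3}{2}\right)\bigl(|B^1_{\sigma,\tau}(v)| + 2|B^2_{\sigma,\tau}(v)|\bigr) + 2\gamma\left(9 + \frac{15k}{\Delta}\right)\bigl(k - |B_{\sigma,\tau}(v)|\bigr),
\]
and the defining identity $|B^1_{\sigma,\tau}(v)| + 2|B^2_{\sigma,\tau}(v)| = \beta_{\sigma,\tau}(v)\Delta$ collapses the first term into exactly the ``extremal'' contribution appearing in the lemma.

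Next I would substitute this estimate into~\eqref{eq:bound nabla_B} and group the two ``non-extremal'' contributions. The key numerical identity is $(k+\Delta) + \Delta(9+15k/\Delta) = \Delta(10 + 16k/\Delta)$, which shows that the constant $10+16k/\Delta$ in the statement is obtained exactly by adding the coefficient $(1+k/\Delta)$ arising from $2\gamma(k+\Delta)(1-\beta_{\sigma,\tau}(v))$ to the coefficient $(9+15k/\Delta)$ coming from the complementary class $[k]\setminus B_{\sigma,\tau}(v)$. Collecting everything and factoring out the common $2\gamma\cdot(1-\beta_{\sigma,\tau}(v))$ completes the derivation.

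The one delicate step is relating $k-|B_{\sigma,\tau}(v)|$ to $(1-\beta_{\sigma,\tau}(v))$, which is necessary to merge the two non-extremal contributions into a single term with the coefficient $(1-\beta_{\sigma,\tau}(v))$ in the stated form. This is where I expect the main technical bookkeeping to occur, and it will use the basic structural fact that each color in $B_{\sigma,\tau}(v)$ accounts for either $1$ or $2$ neighborhood slots of $v$ (so that $|B^1_{\sigma,\tau}(v)|+2|B^2_{\sigma,\tau}(v)| = \beta_{\sigma,\tau}(v)\Delta$ covers a proportion $\beta_{\sigma,\tau}(v)$ of the neighborhood), together with the degree bound on $v$. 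Everything else is routine arithmetic.
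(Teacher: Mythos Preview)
Your approach is exactly the paper's: the paper simply asserts that the lemma ``follows directly from~\eqref{eq:bound nabla_B} and Lemma~\ref{lem:complement}'' and gives no further argument, so your decomposition into colour classes together with the identity $(k+\Delta)+\Delta(9+15k/\Delta)=\Delta(10+16k/\Delta)$ is precisely the intended route. (Note that this identity already presumes a global factor of $\Delta$ on the right-hand side of the inequality, which is indeed how the bound is used in the proof of Theorem~\ref{thm:improvement}; the statement as printed appears to be missing this factor.)

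The step you flag as ``delicate'' is, however, a real gap and not just bookkeeping. Summing the second estimate of Lemma~\ref{lem:complement} over all $c\notin B_{\sigma,\tau}(v)$ produces a factor $k-|B_{\sigma,\tau}(v)|$, and your merger requires $k-|B_{\sigma,\tau}(v)|\le(1-\beta_{\sigma,\tau}(v))\Delta$. Since $(1-\beta)\Delta=\Delta-|B^1|-2|B^2|$ while $k-|B|=k-|B^1|-|B^2|$, this rearranges to $k+|B^2|\le\Delta$, which is false because $k>\Delta$. What rescues the computation is that the full bound $2\gamma(9+15k/\Delta)$ from Lemma~\ref{lem:complement} is only needed for colours $c\notin B$ that actually appear in $N(v)$; since $\sum_{c\notin B}r_c\le |N(v)|-\beta\Delta\le(1-\beta)\Delta$, there are at most $(1-\beta)\Delta$ of those, which is the count your identity wants. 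Colours with $r_c=0$ have $\cT_1=\emptyset$ and $\cT\cap\cT_3=\emptyset$ in the proof of Lemma~\ref{lem:complement}, so their contribution comes only from $\cT_2$ and is far smaller. You should make this split explicit rather than leaving it as ``bookkeeping'', because the naive inequality you would need is simply false.
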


We conclude this section with the proof of Theorem~\ref{thm:improvement}.
\begin{proof}[Proof of Theorem~\ref{thm:improvement}]
Recall that $\varepsilon =\frac{11}{6}-\frac{161}{88} = \frac{1}{264}$, and set $\gamma = \frac{\varepsilon\Delta}{53 k}$ and 
$$
k\geq  \left(\frac{11}{6}-\frac{1}{84000}\right)\Delta \geq \left(\frac{11}{6}-\frac{\varepsilon}{318}\right)\Delta +1\;,
$$ 
provided that $\Delta$ is large enough. Note that $\frac{k}{\Delta}\geq \frac{9}{5}$.
Using~\eqref{eq:sum}, Lemma~\ref{lem:improvement} and~\ref{lem:improvement2}, it follows that  
\begin{align*}
\nabla(\sigma,\tau)&\leq \left(\frac{11}{6} - \left(\varepsilon-2\gamma\left(10+\frac{16k}{\Delta}\right)\right)(1 - \beta_{\sigma,\tau}(v)) - \gamma \left(\frac{k}{\Delta}-\frac{3}{2}\right) \beta_{\sigma,\tau}(v)\right)\Delta  -k \\
&\leq \left(\frac{11}{6} - \left(\varepsilon-\frac{52\gamma k}{\Delta}\right)(1 - \beta_{\sigma,\tau}(v)) - \frac{\gamma k}{6 \Delta}  \beta_{\sigma,\tau}(v)\right)\Delta  -k \\
&= \left(\frac{11}{6} - \frac{\varepsilon}{53}(1 - \beta_{\sigma,\tau}(v)) - \frac{\varepsilon}{318} \beta_{\sigma,\tau}(v)\right)\Delta - k\\
&\leq \left(\frac{11}{6} - \frac{\varepsilon}{318} \right)\Delta  - k \leq -1\;,
\end{align*}
as desired.
\end{proof}

\section{List coloring}\label{sec:list}

In this section we show rapid mixing for the list coloring version of Glauber dynamics for the same range of $k$ as in the non-list colorings, thus giving a proof of Theorem~\ref{thm:list_Glau}.

A \emph{list assignment} of $G$ is a function $L: V(G) \rightarrow 2^\mathbb{N}$. An \emph{$L$-coloring} is a function $\sigma:V(G)\rightarrow \mathbb{N}$ such that $\sigma(u) \in L(u)$ for all $u \in V(G)$.  Usually in the literature list colorings are assumed to be proper, here we will not require this but distinguish between proper and not necessarily proper list colorings.  We denote by $\Omega^{L}$ the set of all $L$-colorings of $G$.
If $|L(u)| = k$ for all $u \in V(G)$, then we say that $L$ is a \emph{$k$-list-assignment} and that an $L$-coloring is a \emph{$k$-list-coloring}. 

The \emph{Glauber dynamics for $L$-colorings} is a discrete-time Markov chain $(X^L_t)$ with state space $\Omega^{L}$ and transitions between states given by recoloring at most one vertex; if $X^L_t=\sigma$, then we proceed as follows.
\begin{enumerate}
\setlength\itemsep{0em}
\item Choose $u$ uniformly at random from $V(G)$.
\item For all vertices $v \neq u$, let $X^L_{t+1}(v) = \sigma(v)$.
\item Choose $c$ uniformly at random from $L(u)$, if $c$ does not appear among the colors in the neighborhood of $u$ then let $X^L_{t+1}(u) = c$, otherwise let $X^L_{t+1}(u) = \sigma(u)$.
\end{enumerate}
Although the state space is $\Omega^L$, $(X_t^L)$ will converge to the uniform distribution on proper $L$-colorings (we refer to the discussion at the end of Section~\ref{sec:flip} for further details).

The proof strategy to show that Glauber dynamics for $k$-list-colorings is rapidly mixing provided that $k$ is large enough will be analogous to the non-list coloring case. 

Before describing the version of flip dynamics for list colorings that we will analyze, we introduce some definitions.  Given $\sigma\in \Omega^{L}$, one can define Kempe components of $\sigma$ as for colorings and we denote by $\cK^L_\sigma$ the multiset of Kempe components $S=S_\sigma(u,c)$ with $u\in V(G)$ and $c\in L(u)$. As before, the Kempe components $S=\{u\}$ are counted with multiplicity for each color $c\in L(u)$ that does not appear in the neighborhood of $u$.

Recall that $\sigma_S$ is obtained by swapping the colors in $S$ and note that $\sigma_S$ is not necessarily an $L$-coloring as the new color of a vertex might not be in its list. Given a  Kempe component $(c_1,c_2,S)$ in $\cK^L_\sigma$, we say that $S$ is \emph{flippable} if for every $u\in S$ we have $\{c_1,c_2\}\subseteq L(u)$. If $S$ is flippable, then $\sigma_S\in \Omega^L$.

The \emph{flip dynamics for $L$-colorings with flip parameters $\mathbf{p}=(p_1,p_2,\dots)$} is a discrete-time Markov chain $(Y^L_t)$ with state space $\Omega^L$ and transitions between states given by swapping colors in flippable Kempe components; if $Y^L_t=\sigma$, then we proceed as follows.
\begin{enumerate}
\setlength\itemsep{0em}
\item Choose $u$ uniformly at random from $V(G)$.
\item Choose $c$ uniformly at random from $L(u)$.
\item Let $S=S_{\sigma}(u,c)$ and $\ell= |S|$. If $S$ is flippable, with probability $p_\ell/\ell$ let $Y^L_{t+1}= \sigma_{S}$, otherwise 
let $Y^L_{t+1}=\sigma$.
\end{enumerate}

We will prove the analogous version of Theorem~\ref{thm:main} for list colorings.
\begin{theorem}\label{thm:list_flip}
There exists a bounded $\mathbf{p}$ such that for every $k\geq \left(\frac{11}{6}-\eta\right)\Delta$, with $\eta = \frac{1}{\const}$, and every $k$-list-assignment $L$, flip dynamics for $L$-colorings on a graph on $n$ vertices with maximum degree $\Delta$ has mixing time 
$$
t_{L-\flip (\mathbf{p})}\leq kn\log{(4 n)}\;.
$$
\end{theorem}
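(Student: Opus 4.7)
The plan is to mirror the proof of Theorem~\ref{thm:main}: establish a list-coloring analogue of Theorem~\ref{thm:improvement} and then apply path coupling via Theorem~\ref{thm:path coupling}. First I would set up the path coupling infrastructure exactly as in Sections~\ref{sec:metric} and~\ref{sec:coup}, with state space $\Omega^L$ in place of $\Omega=[k]^n$. Let $\Gamma^L$ be the graph on $\Omega^L$ whose edges are pairs of $L$-colorings differing at a single vertex; reuse the weight function from~\eqref{eq:omega} to obtain a pre-metric $(\Gamma^L,\omega)$ and the extended metric $d$, and define $d_H$ and $d_B$ as before. For $\sigma\tau\in E(\Gamma^L)$ differing at $v$, the definitions of extremal configurations, $B^i_{\sigma,\tau}(v)$, and $\beta_{\sigma,\tau}(v)$ carry over verbatim, and $\diam(\Gamma^L)\le n$.

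Next I would define a coupling of one step of the list flip dynamics for adjacent $L$-colorings $\sigma,\tau$. Starting from Vigoda's coupling of Section~\ref{sec:coup} (a joint distribution $\bP_{\sigma\tau}$ on pairs $(S,S')$), I would modify each pair to respect flippability: if $S$ is flippable but $S'$ is not, redirect the mass $\bP_{\sigma\tau}(S,S')$ to $\bP_{\sigma\tau}(S,\emptyset)$; symmetrically if $S'$ is flippable but $S$ is not; and send the mass to $\bP_{\sigma\tau}(\emptyset,\emptyset)$ if neither is flippable. Because $L$ is shared by both chains and $\sigma,\tau$ agree outside $v$, every $S\in\overline{\cA_\varphi}$ has the same flippability status in both chains, so this adjustment is symmetric on $\overline{\cA_\varphi}$; the marginals $\bP_\sigma$, $\bP_\tau$ of the list flip dynamics are preserved, and asymmetric adjustments only occur for components $S\in\cA_\varphi(c)$ meeting $v$ or its neighbors.

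Then I would prove the list analogue of Theorem~\ref{thm:improvement}: for the flip parameters $\mathbf{p}^\ast$ of Observation~\ref{obs} and $k\ge\left(\frac{11}{6}-\eta\right)\Delta$, the modified coupling satisfies $\nabla(\sigma,\tau)\le -1$. The upper bound on $\nabla_H$ from Lemma~\ref{lem:vig2} is a sum of contributions for each coupled flip; removing the contribution of a non-flippable component only drops a non-negative term, except that the favorable $-\min\{q_i,q_i'\}$ cancellations can be lost when both sides of a coupled flip are unflippable. A short case analysis shows that such losses are absorbed by the slack; in particular the identity $\nabla_H(\sigma,\tau,c)=-1$ for $r=0$ survives because a singleton $\{v\}$ with color $c\in L(v)$ is automatically flippable. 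For $\nabla_B$, the cardinality bounds on $|\cT_1|,|\cT_2|,|\cT_3^{\pm}|$ and on the auxiliary sets $\cS_0,\cS_2$ in Lemma~\ref{lem:complement} depend only on structural properties of configurations and therefore carry over unchanged. Applying Theorem~\ref{thm:path coupling} with $\alpha=\tfrac{1}{kn}$ and $\diam(\Gamma^L)\le n$ then gives $t_{L-\flip(\mathbf{p})}\le kn\log(4n)$.

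The main obstacle will be the careful bookkeeping of asymmetric flippability for components in $\cA_\varphi(c)$: whenever a neighbor singleton $\{w_i\}$, or a small Kempe component through $vw_i$, is flippable in one of $\sigma,\tau$ but not in the other (for instance when $\tau(v)\notin L(w_i)$), the modified coupling trades a coupled flip for an uncoupled one and can increase $d_H$ where the original coupling left it unchanged. Such asymmetry only occurs for those colors $c$ such that $\sigma(v)\notin L(w_i)$ or $\tau(v)\notin L(w_i)$ for some $w_i\in N(v)\cap\varphi^{-1}(c)$, so the cumulative contribution over all $c\in[k]$ is bounded by $O(\Delta)$ with a small constant, and it can be absorbed by the slack $\eta=\frac{1}{\const}$ in the hypothesis on $k/\Delta$ (possibly after a modest reduction of $\gamma$). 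Once this is verified, the rest of the proof is identical to the non-list case, and the proof of Theorem~\ref{thm:list_Glau} then follows from the list analogue of Theorem~\ref{thm:trans} (again a direct adaptation of Vigoda's comparison argument using that $\mathbf{p}^\ast$ is bounded).
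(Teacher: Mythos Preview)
Your overall strategy---establish a list analogue of Theorem~\ref{thm:improvement} for the parameters $\mathbf{p}^*$ and then apply path coupling---is exactly what the paper does. The difference is in how non-flippable components are handled, and your version has a real gap.

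The paper does \emph{not} modify the coupling. Instead it redefines the configuration parameters: set $a_i^L=0$ whenever $S_\tau(w_i,\sigma(v))$ is not flippable (similarly $b_i^L$), and set $a^L=0$ whenever $S_\sigma(v,c)$ is not flippable (similarly $b^L$). Since $p_0=0$, feeding these parameters into the \emph{same} coupling of Section~\ref{sec:coup} automatically suppresses non-flippable flips, and one obtains the list analogue~\eqref{eq:list} of~\eqref{eq:bound nabla_H} directly. The analysis then splits into two new cases: (i) $c\notin L(v)$, where $a^L=b^L=0$ and a two-line computation gives $\nabla_H^L(\sigma,\tau,c)\le \tfrac{4}{3}r<r\kappa^*$ (no ``$-1$'' here, but that is fine since only the $k$ colors in $L(v)$ contribute the $-k$ in Lemma~\ref{lem:improvement}); and (ii) $c\in L(v)$ with $a^L=0$ but $a^L_{\max}\ne 0$, handled by comparison with an $(r-1)$-configuration. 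This yields Lemma~\ref{lem:improvement} verbatim for lists, with the same $\varepsilon$ and hence the same $\eta$.

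Your route---keep the ordinary configurations and redirect coupling mass to $\emptyset$ when a side is non-flippable---could in principle work, but the justification you give does not. Your key claim, that asymmetric flippability ``only occurs for those colors $c$ such that $\sigma(v)\notin L(w_i)$ or $\tau(v)\notin L(w_i)$ for some $w_i$'', is false: a Kempe component $S_\varphi(w_i,\pi(v))$ is non-flippable whenever \emph{any} vertex of the component lacks one of the two colors in its list, not just $w_i$. So the asymmetry is not localized to $N(v)$, and your $O(\Delta)$-with-small-constant bound on the cumulative loss has no support. You also do not treat colors $c\notin L(v)$ (which can appear on neighbors of $v$ and do contribute to $\nabla_H^L$) as a separate case; summing over ``$c\in[k]$'' is not the right index set here. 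Finally, even if an $O(\Delta)$ loss could be established, absorbing it ``by a modest reduction of $\gamma$'' would degrade $\eta$, whereas the paper's parameter-redefinition approach recovers the \emph{same} $\eta=\frac{1}{\const}$ as in the non-list case.
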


The proof of this theorem follows the same lines as Theorem~\ref{thm:main}. We will describe the proof strategy, stressing the parts where the argument is different for list coloring and omitting the ones that are straightforward adaptations of the  coloring case.

Let $\sigma,\tau\in \Omega^L$ that differ only at a vertex $v$. 
For $\varphi\in \{\sigma,\tau\}\subseteq \Omega^L$, $\pi\in \{\sigma,\tau\}\setminus \{\varphi\}$, $c\in L(v)$ and $\{w_1,\dots, w_r\}=N(v)\cap \varphi^{-1}(c)$, we define the $r$-configurations $(a^L_1,\dots,a^L_r;b^L_1,\dots,b^L_r)$ for $c$ in $(\sigma,\tau)$ as before, with the sole difference that we also set $a^L_i=0$ if $S_\tau(w_i,\sigma(v))$ is not flippable and $b^L_i=0$ if $S_\sigma(w_i,\tau(v))$ is not flippable.  We define $i^L_a$, $i^L_b$, $a^L_{\max}$ and $b^L_{\max}$ analogously as before, and note that the latter two can be zero.
Let $a^L=1+ a_1^L+\dots+a_r^L$ if $S_\sigma(v,c)$ is flippable and $a^L=0$ otherwise. Let $b^L=1+b_1^L+\dots+b^L_r$ if $S_\tau(v,c)$ is flippable and $b^L=0$ otherwise. Define $q_i(L)$ and $q_i'(L)$ as in~\eqref{eq:q} for the list version of the parameters.

According to this, we use the same definition of extremal configurations, metric $d$ on $\Omega^L$,  $d_H$ and $d_B$. Again, for any pair $\sigma',\tau'\in \Omega^L$, we have $d(\sigma',\tau') \leq d_H(\sigma',\tau')$, which implies that $d_B(\sigma',\tau')\geq 0$.

For $c\in \mathbb{N}$ consider the sets
$\cA^L_\varphi(c) := \{S_\varphi(v,c), \{S_\varphi(w_i,\pi(v))\}_{i\in [r]}\}$ and the multisets $\cA^L_\varphi:=\{\cA^L_\varphi(c):\,c\in \mathbb{N}\}$ and $\overline{\cA^L_\varphi}:=\cK_\varphi^L\setminus \cA^L_\varphi$. 
As before, for every $S\in \cK^L_\sigma\cup \{\emptyset\}$, one can define $\bP^L_\sigma(S):=\bP(Y^L_{t+1}=\sigma_S\mid \, Y^L_t=\sigma)$. 
We use the same coupling as the one defined in Section~\ref{sec:coup} and define $\nabla^L$, $\nabla_H^L$ and $\nabla_B^L$ analogously as for colorings. Fix the flip parameters $\mathbf{p^*}$ provided in Observation~\ref{obs}.

We will prove an analogue of Lemma~\ref{lem:improvement} to bound $\nabla^L_H$ for list colorings. As in Section~\ref{sec:nablaH}, we have
$$
\nabla^L_H(\sigma,\tau) = \sum_{c\in \mathbb{N}} \nabla_H^L(\sigma,\tau,c)\;.
$$
Suppose first that $r=0$. Then $c\in L(v)$ and $\nabla_H^L(\sigma,\tau,c)=-1$. 
If $r\geq 1$, the analogous of equation~\eqref{eq:bound nabla_H} also holds for list colorings,
\begin{align}\label{eq:list}
\nabla_H^L(\sigma,\tau,c) &\leq (a^L-a^L_{\max}-1)p_{a^L}+(b^L-b^L_{\max}-1)p_{b^L}\nonumber\\ 
&\;\;\;\;+ \sum_{i\in [r]} (a^L_i q_i(L)+ b^L_i q_i'(L)-\min\{q_i(L),q_i'(L)\})\;.
\end{align} 
We will bound each term $\nabla_H^L(\sigma,\tau,c)$ depending on whether $c\in L(v)$ or $c\notin L(v)$.

If $c\notin L(v)$, then it suffices to show that $\nabla_H^L(\sigma,\tau,c)\leq r \kappa^*$. 
Note that $a^L=b^L=0$, $q_i(L)=p_{a_i^L}$ and  $q'_i(L)=p_{b_i^L}$ for every $i\in [r]$. Let $\{c_i^L,d_i^L\}= \{a_i^L,b_i^L\}$ with $p_{c_i^L}\geq p_{d_i^L}$.  Using~\eqref{eq:list}, we obtain 
\begin{align*}
\nabla_H^L(\sigma,\tau,c) &\leq \sum_{i\in [r]} (a^L_i p_{a_i^L}+ b^L_i p_{b_i^L}-\min\{p_{a_i^L},p_{b_i^L}\}) \\
&= \sum_{i\in [r]} c^L_i p_{c_i^L}+ (d^L_i-1) p_{d_i^L}
\leq \frac{4}{3}r < r \kappa^*\;,
\end{align*} 
where we have used that $ip_i\leq 1$ and $(i-1)p_i\leq \frac{1}{3}$.

Now assume that $c\in L(v)$. We will compare these bounds with the ones we obtained in Section~\ref{sec:nablaH} by plugging the values of the $r$-configuration $(a^L_1,\dots,a^L_r;b^L_1,\dots,b^L_r)$. Observe that there are only two differences with respect to non-list colorings; first, $a^L$ and $b^L$ can be zero, and second, $a^L_{\max}$ and $b^L_{\max}$ can be zero. Recall that $p_0=0$.  It is important to stress that, since $c\in L(v)$, $a^L_{\max}=0$ implies $a^L=0$, and similarly for $b^L$. Therefore, the only difference between~\eqref{eq:list} and~\eqref{eq:bound nabla_H}, are the cases where either $a^L=0$ or $b^L=0$. If $a^L=a^L_{\max}=0$, then the total contribution of this part is zero and analogously for $b^L$. 
Therefore, the only interesting case is when $a^L=0$ and $a^L_{\max}\neq 0$; in this case $r\geq 2$. 
Since $a^L=0$ and $c\in L(v)$, there exists $j\in [r]$ such that $a^L_{j}=0$.  Consider the $(r-1)$-configuration 
\begin{align}\label{eq:reduced}
(a^L_1,\dots,a^L_{j-1},a^L_{j+1} ,\dots, a^L_r;b^L_1,\dots,b^L_{j-1},b^L_{j+1} ,\dots, b^L_r)\;.
\end{align}
Let $a=1+ \sum_{i\neq j}a^L_i$, $b=1+ \sum_{i\neq j}b^L_i$. Let $b_{\max}$ be the maximum of the $b^L_i$ with $i\neq j$ and note that $b_{\max}\leq b^L_{\max}$. Recall that $(\mathbf{p^*},\frac{11}{6})$ is an optimal solution of $(P)$ and $(\mathbf{p^*},\frac{161}{88})$ is an optimal solution of $(P^*)$ . If $r\geq 4$, then the $r$-configuration~\eqref{eq:reduced} for $c$ is non-extremal and
\begin{align*}
\nabla_H(\sigma,\tau,c)&\leq (a-a^L_{\max}-1) p_{a}+ (b-b_{\max}-1) p_{b} + \sum_{i\neq j} a_i^Lq_{i}+b_i^L q'_{i}-\min\{q_{i},q'_{i}\} \\
&\leq \frac{161}{88} (r-1)-1\;.
\end{align*}
If $1\leq r\leq 3$, then~\eqref{eq:reduced} can be extremal and
\begin{align*}
\nabla_H(\sigma,\tau,c)&\leq (a-a^L_{\max}-1) p_{a}+ (b-b_{\max}-1) p_{b} + \sum_{i\neq j} a_i^Lq_{i}+b_i^L q'_{i}-\min\{q_{i},q'_{i}\} \\
&\leq \frac{11}{6}(r-1)-1 \leq \frac{161}{88} (r-1)- \frac{131}{132}\;.
\end{align*}
For $i=i_a^L$ we have $q_{i}= p_{a_{\max}}-p_a$ and $q_{i}(L)= p_{a_{\max}}$. Moreover, we have $q_{j}'= 0$ and $q_{j}'(L)\leq p_{b_j^L}$. We may assume that $b_{\max}^L\neq 0$, as otherwise we have $b=b_{\max}=0$ and the contribution of this part is zero, as before. Using these bounds and~\eqref{eq:list}, we obtain that for any such $c\in [k]$
\begin{align*}
\nabla^L_H(\sigma,\tau,c) &\leq  (b^L-b^L_{\max}-1) p_{b^L} + \sum_{i\in [r]} (a_i^Lq_{i}(L)+b_i^L q'_{i}(L)-\min\{q_{i}(L),q'_{i}(L)\})\\
&\leq \nabla_H(\sigma,\tau,c) - (a-2a^L_{\max}-1) p_{a}+(b^L-b^L_{\max}-1) p_{b^L} +b^L_j p_{b^L_j}\\
&\leq \nabla_H(\sigma,\tau,c) + (a-1) p_{a}+(b^L-2) p_{b^L} +b^L_j p_{b^L_j}\\
&\leq \nabla_H(\sigma,\tau,c) + \frac{19}{12}\\
&\leq \frac{161}{88}\cdot r-1\;,
\end{align*}
where we have used that $b^L_{\max}\geq 1$, $a\geq a_{\max}^L+1$,  $ip_i\leq 1$, $(i-1)p_i\leq \frac{1}{3}$ and $(i-2)p_i\leq \frac{1}{4}$. Thus, Lemma~\ref{lem:improvement} also holds for $\nabla^L_H$.

Lemma~\ref{lem:complement} holds for $\nabla_B^L$ as well, since all the negative contributions on the bound are given by Kempe components $S=S_\sigma(u,c)$ of size $1$, which are always flippable as $c\in L(u)$. The positive contributions of the Kempe components is still bounded by the same quantity since, in the worst case, they are all flippable. 

Using the same flip parameters and reasoning as in the proof of Theorem~\ref{thm:improvement}, it follows that for every $k$-list assignment $L$ with $k\geq (\frac{11}{6}-\eta)\Delta$, flip dynamics for $L$-colorings mixes in time $kn
\log{(4n)}$, concluding the proof of Theorem~\ref{thm:list_flip}.
The transfer result of Vigoda (Theorem~\ref{thm:trans}) can be directly adapted to list colorings, and thus, Theorem~\ref{thm:list_Glau} follows as a corollary of Theorem~\ref{thm:list_flip}.

\section{Conclusion and open problems}\label{sec:open}
The main conjecture in the area is still wide open. 
\begin{conjecture}\label{conj:folklore}
For $k \geq \Delta+2$, Glauber dynamics for $k$-colorings has mixing time $O(n \log n)$.
\end{conjecture}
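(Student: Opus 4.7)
The plan would be to iterate and deepen the extremal-metric technique introduced in this paper, treating the constant $\eta = 1/\const$ gained here as the first term of a (hopefully infinite) series of improvements. The idea is an inductive refinement: once a metric $d^{(m)}$ is fixed and the corresponding LP $(P^{(m)})$ of Section~\ref{sec:nablaH} is solved, one identifies the remaining extremal configurations of $(P^{(m)})$ and introduces a further correction term $d_{B}^{(m+1)}$ to the metric that penalizes them, exactly analogous to the passage from $d_H$ to $d = d_H - d_B$ performed in Section~\ref{sec:metric}. Each layer should buy a small constant improvement in the ratio $k/\Delta$, so in principle the sequence of bounds $(11/6 - \eta_m)\Delta$ could be driven all the way down to $\Delta + 2$ in the limit.

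The concrete first step would be to redo the analysis of Section~\ref{sec:nablaH} for flip dynamics coupled with a metric $d^{(2)} = d - d_{B'}$, where $d_{B'}$ is a new term that counts neighbors of $v$ participating in configurations extremal for the reduced program $(P^*_{\text{red}})$ of this paper (namely, the $(2;1)$ and $(3,3;1,1)$ configurations and their symmetric partners). One would then prove a variant of Lemma~\ref{lem:complement} quantifying how a single flip of a Kempe component in $\overline{\cA_\varphi}$ tends to destroy the density of those second-layer extremal configurations, with a signed contribution that dominates the corresponding positive error term for a slightly smaller value of $k/\Delta$. If a uniform contraction constant $\alpha^{(2)} > 0$ can be extracted via the path-coupling bound of Theorem~\ref{thm:path coupling}, the result would be rapid mixing for some $k \geq (11/6 - \eta_2)\Delta$ with $\eta_2 > \eta$. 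One would then iterate.

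The main obstacle is structural: the one-step path-coupling framework, with any bounded metric of this flavor, seems destined to hit a barrier well above $\Delta + 2$. Once $k$ is close to $\Delta$, the positive contributions in Lemma~\ref{lem:complement} of the form $2\gamma(9 + 15k/\Delta)(1 - \beta)$ inevitably outweigh the negative savings $i\gamma(k/\Delta - 3/2)\beta$, no matter how cleverly the metric is weighted, because a single step of the chain cannot ``see'' enough of the neighborhood of $v$ to discriminate between global and local configurations. My expectation is that pushing $\eta_m$ to a finite limit $\eta_\infty$ strictly less than $5/6$ is feasible with enough bookkeeping, but reaching $k \geq \Delta + 2$ would require genuinely leaving the one-step coupling paradigm. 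The most promising direction, in my view, is to combine this extremal-metric philosophy with either a multi-step coupling (allowing the two walks to first equilibrate into a regime where extremal configurations are rare in a quantitative sense) or with a spectral-independence or modified log-Sobolev framework, where the local contraction bounds proved here can be amplified globally. Closing the gap all the way to $\Delta + 2$ seems to require a fundamentally new ingredient beyond any single layer of path coupling.
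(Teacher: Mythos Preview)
The statement you were asked to prove is Conjecture~\ref{conj:folklore}, and it is presented in the paper precisely as a \emph{conjecture}---the central open problem of the area---not as a theorem. The paper contains no proof of it; it appears in Section~\ref{sec:open} as the main open question remaining after the paper's actual result (rapid mixing for $k \geq (\tfrac{11}{6}-\eta)\Delta$).

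Your write-up is not a proof either, and you say so yourself. What you have submitted is a research programme: iterate the extremal-metric idea, each layer shaving a further constant off the ratio $k/\Delta$. But you then correctly observe that the one-step path-coupling framework with any bounded metric of this type will hit a barrier strictly above $\Delta+2$, and you conclude that ``closing the gap all the way to $\Delta+2$ seems to require a fundamentally new ingredient.'' That is an honest assessment of the situation, but it means your proposal does not constitute a proof of the conjecture---it is a sketch of how one might extract further incremental improvements, together with an acknowledgment that the method cannot reach the target. In particular, nothing in your outline establishes that the sequence $\eta_m$ even converges to a positive limit, let alone to $5/6$; the claim that ``each layer should buy a small constant improvement'' is unsupported, and the dependence of the gain on the growing complexity of the metric could well make the series telescope to something finite and small.

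So there is no discrepancy with the paper to report: the paper does not claim a proof, and neither do you.
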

We note that Jerrum's original argument showing that Glauber dynamics for $k$-colorings is rapidly mixing for $k >2\Delta$ extends not only to list colorings but to a generalization of list coloring called correspondence coloring as well.  Informally speaking, in correspondence coloring each vertex has a list of available colors and each edge has a matching between the lists of the endpoints determining the conflicts between colors (see~\cite{DP17} for the formal definition).  Unfortunately flip dynamics for correspondence coloring is not well defined. In light of this, we raise the following question. 
\begin{question}
Does there exists $\eta>0$ such that Glauber dynamics for $k$-correspondence colorings is rapidly mixing provided that $k\geq (2-\eta) \Delta$?  
\end{question}
\noindent
Studying the correspondence coloring version of a problem can be illustrative, for instance correspondence coloring was introduced by Dvo\v{r}\'{a}k and the third author~\cite{DP17} to answer a long-standing question of Borodin concerning list coloring planar graphs without cycles of certain lengths.    
We believe that understanding the sampling of correspondence colorings with less than $2\Delta$ colors could shed light on how to tackle Conjecture~\ref{conj:folklore} for ordinary colorings.

Our primary objective in this paper was to show that there exists $\eta >0$ such that Glauber dynamics for $k$-colorings is rapidly mixing for $k \geq \left(\frac{11}{6}-\eta\right) \Delta$. Our approach could be refined in different ways in hope of obtaining a larger value of $\eta$. For instance, there are several parts of the proof where worst-case bounds are used (e.g. Lemma~\ref{lem:complement}) and a more careful analysis could yield to an improvement of $\eta$. A more challenging approach is to extend our notion of extremal configuration to include other configurations that are nearly extremal, such as $(1;1),(3;1),(4;1),(5;1)$ and $(2,2;1,1)$. Although, this would complicate the analysis of $\nabla_B$, it would likely lead to a larger $\eta$.

As a final remark, in contrast to  stopping-time-based metrics, our ``extremal'' metric only involves the study of one step of the chain and it is fairly easy to analyze. We believe this approach can have fruitful applications for bounding the mixing time of other Markov chains.

\end{document}